\DeclareMathAlphabet{\mathpzc}{OT1}{pzc}{m}{it}
\newtheorem{theorem}{Theorem}[section]
\def\ps@pprintTitle{%
\let\@oddhead\@empty
\let\@evenhead\@empty
\def\oddfoot\@empty
\let\evenfoot\@oddfoot}
\providecommand{\keywords}[1]
{
  \small	
  \textbf{\textit{Keywords---}} #1
}
\title {Optimization of Traffic Control in $\textrm{\it MMAP[k]/PH[k]/S}$ Catastrophic Queueing Model with $\textit{P\!H}$ Retrial Times and Preemptive Repeat  Policy 
} 
\author{Raina Raj$^{1}$, Vidyottama Jain$^{1}$   \thanks{Corresponding author}   \\
        \small Central University of Rajasthan, Ajmer, India$^{1}$\\
    }
\date{} 
\begin{document}
\maketitle

\begin{abstract}
The presented study elaborates a multi-server catastrophic retrial queueing model considering  preemptive repeat priority policy with phase-type ($\textrm{\it P\!H}$) distributed retrial times. 
For the sake of comprehension, the scenario of model operation prior and later to the occurrence of the disaster is referred to as the normal scenario and as the catastrophic scenario, respectively.
In both scenarios, the arrival and service processes of all types of calls follow marked Markoian arrival process ($\textrm{\it M\!M\!A\!P}$) and $\textrm{\it P\!H}$ distribution with distinct parameters, respectively. 
 In the normal scenario, the incoming heterogeneous calls are categorized as handoff calls and new calls.  An arriving new call will be blocked when all the channels are occupied, and consequently, will join the orbit (virtual space) of infinite capacity. From the orbit, the blocked new call can either retry for the service or exit the system following $\textrm{\it P\!H}$ distribution. Whereas, an arriving handoff call is given preemptive repeat priority over a new call in service when all the channels are occupied and at least one of the channel is occupied with a new call otherwise the handoff call is dropped, and consequently, this preempted new call will join the orbit. 
In the catastrophic scenario, when a disaster causes the shut down of the entire system and failure of all functioning channels,  a set of backup channels is quickly deployed to restore services. These failed channels are repaired following $\textrm{\it P\!H}$ distribution. The bursty nature of the catastrophe arrival is represented by Markovian arrival process ($\textit{M\!A\!P}$). Due to the emergency situation in the concerned area, calls to or from the emergency personnel are classified as emergency calls. Therefore, the incoming heterogeneous calls are divided into three categories: handoff calls, new calls, and emergency calls. Emergency calls are provided absolute preemptive priority over handoff calls and new calls due to the pressing need to save lives  in such situations.
Behaviour of the proposed system is modelled by the level-dependent-quasi birth death ($\textrm{\it L\!D\!Q\!B\!D}$) process. The Markov chain's ergodicity criteria are established by demonstrating that it belongs to the class of asymptotically quasi-Toeplitz Markov chains ($\textrm{\it A\!Q\!T\!M\!C}$). For the approximate computation of the stationary distribution, a new approach is developed.
 The expressions of various performance measures have been derived for the numerical illustration. An optimization problem to obtain optimal value of total number of  backup channels  has been formulated and dealt by employing non dominated sorting genetic algorithm-II (NSGA-II) approach.
\end{abstract} \hspace{10pt}

\keywords{Catastrophe Phenomenon, marked  Markovian Arrival Process,  NSGA-II, Preemptive Repeat Priority Policy, Phase-Type Distribution,   Retrial  Queue.}

\section{Introduction} \label{section1}

Over the past decade,  there has been discovered a substantial attention  in queuing models with disastrous events in communications and cellular networks. These disastrous events are either human-induced (e.g., fire, virus attack, power outages, etc.) or natural (e.g., flood, tsunami, cyclone, etc.), and are typically mentioned as catastrophe events. With the occurrence of a catastrophic event,  all active and waiting customers/traffic are compelled to exit the system abruptly, rendering it inoperable. For a comprehensive survey of catastrophic events occurring in  communication networks, refer to \cite{dabrowski2015catastrophic} and references cited therein. For these types of catastrophic queueing models, prioritization of traffic is also important  in many cellular network applications  e.g., in multiprocessor switching, voice traffic is provided priority over data traffic. The system's various aspects, such as arrival discipline, service discipline, categories of services, and so on, ensure traffic prioritization. Many priority policies, such as guard channel policy, threshold policy, preemptive priority policy, non-preemptive priority policy, and others, have been proposed in the literature to  this end (refer, \cite{brandwajn2017multi,chang1965preemptive,krishnamoorthy2008map,machihara1995bridge}). 
In this investigation,  preemptive repeat priority policy has been implemented for the higher priority traffic. When all of the channels are occupied, but at least one of these channels is occupied by a lower priority traffic, the arriving higher priority traffic will preempt the service of the ongoing lower priority traffic. This preempted lower priority traffic will then join a virtual space, called orbit, and it will retry for its service from the scratch. The process of recurrent attempts to obtain the service is referred as retrial phenomenon and  it's been extensively explored by the researchers in the area of communications and cellular networks \cite{kim2016survey}.


 An overview of research works on the catastrophic queueing models without retrial phenomenon can be found in the articles (refer, \cite{yajima2019central,baumann2012steady,yechiali2007queues}). Sudhesh et al. \cite{sudhesh2017transient} considered a   two heterogeneous servers queueing model with catastrophic event, server failure and repair.   In their work, the waiting customers were presumed to be impatient and may exit the system at any time, when the system was down. However, the applicability of  the above mentioned models has been diminished in the present scenario, since the incoming call arrival followed Poisson process and service times was considered to be exponentially distributed. In contrast to the memory-less property of stationary Poisson flow, the input stream of arrival contains burstiness and correlation properties.    Thus, more generalized arrival and services processes are employed, such as Markovian arrival process ($\textit{M\!A\!P}$), marked Markovian\ arrival process ($\textit{M\!M\!A\!P}$) and phase-type ($\textit{P\!H}$) distribution.
   Some of the  pertinent studies with the consideration of more general processes  are as follows. 
     Chakravarthy \cite{chakravarthy2017catastrophic} presented a $\textit{M\!A\!P/P\!H/1}$ queueing model considering catastrophic event and delayed action. Recently, Kumar and Gupta \cite{kumar2020analysis} studied a discrete–time catastrophic model with population arrival following batch Bernouli process, binomial catastrophe arrival occurred according to discrete-time renewal process according to which each individual  either survived with probability $p$ or died with probability $1-p$.


  Reviews of some of the relevant literature for the catastrophic retrial queueing models are as follows.
  Wang et al. \cite{wang2008transient}  proposed a $M/G/1$ retrial queueing model with catastrophe phenomenon. In this study, the inter-retrial time was exponentially distributed and catastrophe arrival occurred according to Poisson process. On the similar track, Chakravarthy et al. \cite{chakravarthy2010retrial}  presented a $\textrm{\it M\!A\!P/P\!H/1}$ retrial queueing model with catastrophe phenomenon and repair process. The arrival of catastrophe followed Poisson process and failed channels were repaired following exponential distribution. Their model was a confined one due to the assumption of exponential distribution for retrial and repair processes. Recently, Ammar and Rajadurai \cite{ammar2019performance} proposed a $M/G/1$ preemptive priority retrial queue with  working breakdown services and disasters. They assumed  when a failure occurred in the system, the major server was sent for the repair and the replacement server worked at a slow rate of service. Though the above mentioned studies had considered generalized arrival and/or service processes, yet retrial process was described through exponential distribution only. In wireless cellular networks, the inter-retrial times are notably brief in comparison to the service times. Since, the retrial attempt is just a matter of pushing one button, these retrial customers will make numerous attempts during any given service interval. Therefore, the consideration of exponential retrial times in place of non-exponential ones could lead to under or over estimating the system parameters as shown by various studies in the literature (refer, \cite{chakravarthy2020retrial,dharmaraja2008phase,jain2021,shin2011approximation}). Raj and Jain \cite{raj2021} presented a $\textrm{\it M\!M\!A\!P[2]/P\!H[2]/S}$ queueing model with $\textrm{\it P\!H}$ distributed retrial times and preemptive repeat priority policy. The proposed a traffic control optimization problem and employed heuristic approaches to obtain the optimal solution. Therefore, to obtain realistic performance measures for retrial phenomenon,  a more generalized approach,  $\textrm{\it P\!H}$ distribution  has been applied in the provided model.

In this work,  a  multi-server catastrophic  queueing model with  $\textrm{\it P\!H}$ distribution for retrial process and preemptive repeat priority policy is introduced.   To the best of authors' knowledge, the proposed model is the first one that deals with such complex system. 
For the sake of clarity, the scenario of model functioning prior to the occurrence of  a disaster (man-made/natural) is referred to as the normal scenario, and after the disaster, as the catastrophic scenario. In the normal scenario, the system will  provide services to all incoming calls; however, in the catastrophic scenario, the system will collapse, flushing out all active and waiting calls. $\textrm{\it M\!M\!A\!P}$ and $\textrm{\it P\!H}$ distributions with varying parameters are used to describe the arrival and service processes of all types of incoming calls, respectively.  
In normal scenario, the incoming calls are classified as handoff call and new call. The new call, which finds all the channels busy upon its arrival will join the orbit (virtual space) of infinite capacity and will be referred as a retrial call (\cite{jain2020numerical}). Following $\textit{P\!H}$ distribution, the retrial call can either retry for service or quit the system without receiving it. When all the channels are occupied and at least one of the channels is occupied with the new call, an arriving handoff call is given preemptive priority over it; otherwise, the arriving handoff call is lost from the system. The preempted new call will join the orbit and retry for its service from the scratch.

 In the catastrophic scenario, when a calamity strikes, the normally operating system is entirely shut down. Due to the sudden outbreak, all active and retrial calls are forced to terminate their processes and are removed from the system. Since a catastrophic event is considered bursty in nature,  $\textrm{\it M\!A\!P}$  is an apt representation of the disaster's arrival phase. Following $\textrm{\it P\!H}$ distribution, the failed channels  are  repaired promptly. When all of the channels fail in the system, a set of backup/standby channels is instantly installed in the affected area, and services are immediately restored. Calls to and from emergency services, such as hospitals, police, and fire departments, should be given precedence over other public calls in such tragic circumstances. Thus, the incoming heterogeneous calls are now classified as handoff calls, new calls and emergency calls.  When all channels are occupied by  either handoff calls or new calls, an arriving emergency call is given absolute preemptive priority over either handoff calls or new calls. If all the channels are occupied by both handoff and new calls, an arriving emergency call will preempt the service of an ongoing new call. The arriving emergency call will start receiving service in place of the preempted handoff or new call. These preempted calls will be lost from the system as there is no concept of orbit in this scenario.  The underlying process of the presented system is modelled by level dependent quasi-birth-death $\textit{(L\!D\!Q\!B\!D)}$ process.
The detailed study over   $\textit{L\!D\!Q\!B\!D}$ process can be found in \cite{he2014fundamentals} and \cite{latouche1999introduction}.
Ergodicity conditions of the underlying Markov chain are obtained by proving that the Markov chain satisfies the properties of asymptotically quasi-Toeplitz Markov chains ($\textrm{\it A\!Q\!T\!M\!C}$) \cite{klimenok2006multi}. A new algorithm is developed for efficient computation of the steady-state distribution. Further, the expressions of various performance measures have been derived for the numerical illustration. Due to the  consideration of the preemptive  priority policy, the blocking probability for emergency calls decreases and simultaneously the frequent termination of services for handoff and new calls increases the probability of preemption. Thus, an optimization problem to obtain optimal value of total number of  backup channels  has been formulated and dealt by employing non-dominated sorted genetic algorithm-II (NSGA-II) approach \cite{deb2002fast}.

In this work, a  multi-dimensional Markov chain has been constructed by appropriately  choosing the components for $\textrm{\it P\!H}$ distributed  service and retrial times. There are two different ways to keep track of the phases for service and retrial processes. One of them, referred to as the TPFS (track-phase-for-server) in \cite{he2018space}, keeps track of the current phase of service at each busy server. The second approach, known as the CSFP (count-server-for-phase) in \cite{he2018space} , counts the number of servers having a certain phase of service. The TPFS approach yields a simpler and more transparent form of the blocks of the generator of the multi-dimensional Markov chain. However, if the $\textrm{\it P\!H}$ distribution has $N$ phases with $n$ number of busy channels, then there are $Nn$ states of the service process. CSFP approach makes the multi-dimensional Markov chain less transparent. However, the number of states of the process is equal to $T_n = (n+N-1)!/n! (N-1)!.$ Thus, for $N=2$ and $n=20,$ the number of states are 1048576 for TPFS approach and 21 for the CSFP approach. 
This approach justifies  to use the CSFP method in this work which is helpful in the efficient computation of steady-state distribution.

The layout of this work is  arranged in seven sections.  In Section \ref{section2},  a  $\textrm{\it M\!M\!A\!P[k]/P\!H[k]/S}$ model with  $\textrm{\it P\!H}$ distributed retrial times  and catastrophe phenomenon is described.  In Section \ref{section3},  the infinitesimal generator matrix for the proposed $\textit{L\!D\!Q\!B\!D}$ process has been derived. The ergodicity condition of the underlying process is obtained  by proving that the Markov chain belongs to the class of $\textrm{\it A\!Q\!T\!M\!C}$. An algorithm is proposed to compute steady-state probabilities.  In Section \ref{section4}, formulas of key performance measures to analyse the system efficiency  are derived explicitly. Numerical  illustrations to point out the impact of various intensities over the system performance are presented in Section \ref{section5}. An optimization problem has been formulated to evaluate the  behaviour of the system in Section \ref{section6}. Finally, the underlying model is concluded with the insight for the future works in Section \ref{section7}.

\section{Model Description} \label{section2}
\subsection{Details and Assumptions}

The presented study introduce with a $\textrm{\it M\!M\!A\!P[k]/P\!H[k]/S}$ catastrophic model with $\textrm{\it P\!H}$ distributed retrial times. Here, $S$ is defined as the total number of channels in the system. The proposed approach can be observed to work in two  scenarios: normal and catastrophic. The model's operation prior to the onset of a disaster can be described as normal, and the latter as catastrophic.

\begin{itemize}
    \item[-]\textbf{ Normal Scenario:}   
     In this scenario,  the incoming heterogeneous calls are categorized in two classes ($k=2$) as handoff calls and new calls. Both types of calls have distinct arrival and service processes that follow the $\textit{M\!M\!A\!P}$ and $\textit{P\!H}$ distributions with different parameters, respectively. If all of the channels are occupied when a new call arrives, the new call will join the orbit (virtual space) of infinite capacity and be referred to as a retrial call (\cite{jain2020numerical}).
     The retrial call following $\textit{P\!H}$ distribution can either retry for service or exit the system without obtaining the service. It's been considered that when the number of retrial calls $\mathpzc{l}$ is between 0 and $M$, the retrial rate is $\theta_{\mathpzc{l}}$ and once the number exceed $M,$ the retrial rate is considered as $\theta.$ One of the following two scenarios may occur when a handoff call arrives to the system and all available channels are occupied. The arriving handoff call will be lost from the system if all of the channels are occupied by handoff calls in the first scenario. In the second scenario, among the occupied channels, if at least one of the channels is occupied with a new call, the arriving handoff call will be given preemptive priority over the new call  in progress. The handoff call commenced its service in place of the preempted new call and this preempted new call  joins the orbit.  
    
\item[-] \textbf{ Catastrophic Scenario:}    The presented model is subject to catastrophic events such as power outages, virus assaults, natural disasters, fires, and so on.  With the occurrence of such disaster events, all the calls in the system (the one in service and the one waiting for service) leave the system prematurely  and  the system becomes inactivated.  Since a catastrophic event is bursty by nature, $\textit{M\!A\!P}$ is an appropriate representation of the disaster's arrival phase. The failed channels  are immediately  repaired following $\textit{P\!H}$ distribution.
In this study, it is considered that when the whole system is collapsed and all channels are failed, $K$ backup/standby channels will start providing services at slow rate. These backup channels will stop working once one of the failed channels is fixed. Arriving calls are deemed lost from this point onward until all the channels are fixed.
The occurrence of a disaster causes an emergency situation in the affected area. Thus,  the incoming heterogeneous calls are now categorized in three classes ($k=3$) as handoff calls, new calls and emergency calls. The arrival and service processes of all types of calls follow $\textit{M\!M\!A\!P}$ and  $\textit{P\!H}$ distributions with distinct parameters, respectively.  To provide emergency services in catastrophic scenario, a general sense of priority should be attached to the  emergency calls. Here, it is assumed that the emergency calls are provided  preemptive priority over the handoff and new calls. When an emergency call arrives to the system, out of the following three cases, one might occur.
 \begin{itemize}
     \item[1.] When an arriving emergency call finds  all the channels  occupied with the emergency calls only,  the arriving emergency call will be lost form the system.
     \item[2.] If at the arrival epoch of an emergency call, all the channels are occupied with handoff  calls only, the service of a handoff call will be preempted and the emergency call will start receiving service in place of that preempted handoff call.
     \item[3.] If at the arrival epoch of an emergency call,  all the channels are occupied and at least one of the channels is occupied with a new call, the emergency call will preempt the service of that new call and commence service in its place.
 \end{itemize}
 
    It's been considered that there is no orbit for the blocked or preempted calls when backup channels are working. As, the calls in the orbit need to wait some time before receiving service and it is unreasonable to make calls wait significant time before being admitted when there are urgent needs to save lives or properties.

\end{itemize}

All the other assumptions are described in Table \ref{tab:my_label2}.

\begin{table}[]
	\centering
	\scalebox{0.8}
	{
	\begin{tabular}{|l|l|}
	\hline
	$S$ & total number of channels\\
	\hline
	$K$ & total number of backup channels\\
	\hline
	$C_0, C_{\mathcal{N}}, C_{\mathcal{H}}, C_{\mathcal{E}}$ & the square matrices of size $L_1$that characterize the \textit{M\!M\!A\!P}\\
	\hline
	$\lambda_{\mathcal{H}}, \lambda_{\mathcal{N}}, \lambda_{\mathcal{E}}$ &  the average arrival rates of handoff, new and emergency calls\\
	\hline
	$(\beta_{\mathcal{H}}, A_{\mathcal{H}})$ & representation of  \textit{$P\!H$} distribution of handoff call with dimension $M_{\mathcal{H}}$\\
	\hline
		$(\beta_{\mathcal{N}}, A_{\mathcal{N}})$ & representation of  \textit{$P\!H$} distribution of new call with dimension $M_{\mathcal{N}}$\\
	\hline
		$(\beta_{\mathcal{E}}, A_{\mathcal{E}})$ & representation of  \textit{$P\!H$} distribution of emergency call with dimension $M_{\mathcal{E}}$\\
	\hline
	$\mu_{\mathcal{H}},\mu_{\mathcal{N}},\mu_{\mathcal{E}}$ & the average service rates of handoff, new and emergency calls\\
	\hline
		$(\gamma, \Gamma)$ & representation of  \textit{$P\!H$} distribution of retrial call with dimension $N$\\
			\hline
			$\Gamma^0(1)$,$\Gamma^0(2)$ &  the absorption due to departure from the cell and   the absorption due to retrial attempt\\
				\hline
				$ \theta$ & the average retrial rate of retrial call\\
				\hline
			$D_0, D_{1}$ & the square matrices of size $L_2$ that characterize the \textit{M\!A\!P}\\
			\hline
			$(\alpha, B)$ & representation of  \textit{$P\!H$} distribution of repair process with dimension $R$\\
			\hline
			$T_n^m$ & $(m+n-1)!/(m-1)!n!$\\
			\hline
			$\Tilde{A} $ & $\begin{pmatrix}
			0 & 0\\
			A^0 & A
			\end{pmatrix}$\\
			\hline
			$P_{\kappa_1}(\beta_{\mathcal{H}})$ &  the matrix that defines the transition probabilities of the process at the epoch of starting new process given \\ & that $\kappa_1$  channels are busy \cite{klimenok2006multi}\\
			\hline
			$L_{S-\kappa_1-\kappa_2}(S-\kappa_2,\Tilde{A_H})$ &  the matrix that defines the transition intensities of the process  at the service completion epoch given that \\ & $\kappa_1$ handoff calls are busy at this epoch \cite{klimenok2006multi}\\
			\hline
			$A_{\kappa_1}(S-\kappa_2,A_H)$ & the matrix that defines the transition intensities of the process which do not lead to the service completion \\ & given that  $\kappa_1$ handoff calls are busy \cite{klimenok2006multi}\\
			\hline
			$\bigotimes$ and $\bigoplus$ &  the Kronecker product and sum of matrices, respectively, see \cite{dayar2012analyzing} \\
			\hline
			diag & main diagonal of a matrix\\
			\hline 
			$\text{diag}^{+}$ & upper diagonal of a matrix\\
			\hline
			$\text{diag}^{-}$ & lower diagonal of a matrix\\
			\hline
			row & row vector\\
			\hline
			col & column vector\\
			\hline
	\end{tabular}}
	\caption{Notation}
	\label{tab:my_label2}
\end{table}

\section{Mathematical Analysis} \label{section3}


The underlying process \{$\Xi(t), t \geq 0 \}$ for a cell is defined by the following state space:
\begin{align*}
	\nonumber \Omega &= \{(\mathpzc{l}, \kappa_1, \kappa_2,  \mathfrak{j}, i, v_1, v_2, s_{\mathcal{H}}^1, s_{\mathcal{H}}^2,\ldots,s_{\mathcal{H}}^{M_\mathcal{H}}, s_{\mathcal{N}}^1, s_{\mathcal{N}}^2,\ldots,s_{\mathcal{N}}^{M_\mathcal{N}}, s_{\mathcal{E}}^1, s_{\mathcal{E}}^2,\ldots,s_{\mathcal{E}}^{M_\mathcal{E}}, r^1, r^2,\ldots, r^{N}, v^1, v^2,\ldots, v^{R});\\ &~~~~~~ \mathpzc{l} \geq 0,~0 \leq \kappa_1 \leq S,~0 \leq \kappa_2 \leq S,~ 0 \leq \mathfrak{j} \leq S,~0 \leq i \leq K,~1 \leq v_1 \leq L_1,~1 \leq v_2 \leq L_2\},\end{align*}
where, 
\begin{itemize}
	\item $\mathpzc{l}$ is the number of retrial calls,
	\item $\kappa_1$ is the number of handoff calls in the system receiving service,
	\item $\kappa_2$ is the number of new calls in the system receiving service,
	\item $\mathfrak{j}$ is the number of channels under repair,
	\item $i$ is the number of emergency calls in the system receiving service,
	\item $v_1$ is the current phase of $\textrm{\it M\!M\!A\!P}$ for arrival of calls,
		\item $v_2$ is the current phase  of $\textrm{\it M\!A\!P}$ for arrival of catastrophe,
	\item $s_{\mathcal{H}}^{m_1}$ be the number of channels for handoff calls which are in phase $m_1$ ; $s_{\mathcal{H}}^{m_1} = \overline{0,S},$ $m_1 = \overline{1,M_\mathcal{H}}$,
	\item $s_{\mathcal{N}}^{m_2}$ be the number of channels for new calls which are in phase $m_2$ ; $s_{\mathcal{N}}^{m_1} = \overline{0,S},$ $m_2 = \overline{1,M_\mathcal{N}}$,
	\item $s_{\mathcal{E}}^{m_3}$ be the number of channels for emergency calls which are in phase $m_3$ ; $s_{\mathcal{E}}^{m_3} = \overline{0,K},$ $m_3 = \overline{1,M_\mathcal{E}}$,
	\item $v^{h_2}$ be the number of channels under repair which are in phase $h_2$ ; $v^{h_2} = \overline{0,S},$ $h_2 = \overline{1,R}$,
	\item $r^{h_1}$ be the number of retrial calls  which are in phase $h_1$ ; $r^{h_1} \geq 0,$ $h_1 = \overline{1,N}$.
	\end{itemize}

	The stochastic process \{$\Xi(t), t \geq 0 \}$ can be modelled as level-dependent quasi birth death  ($\textrm{\it L\!D\!Q\!B\!D}$) process with the  infinitesimal generator matrix provided as follows:
\begin{center}
	$\mathscr{Q} =
	\begin{pmatrix}
	\mathscr{Q}^{0} & \mathscr{Q}_{0,1} & 0 & 0 & 0 & 0 &  \\
	\mathscr{Q}^{'}& \mathscr{Q}_{1,1}& \mathscr{Q}_{1,2} & 0 & 0 & 0 &  \\
	\mathscr{Q}_{2,0}  & \mathscr{Q}_{2,1} & \mathscr{Q}_{2,2} & \mathscr{Q}_{2,3} & 0 &  0 &  \\
	\mathscr{Q}_{3,0}  & 0 &\mathscr{Q}_{3,2} & \mathscr{Q}_{3,3} & \mathscr{Q}_{3,4} &   0 &  \\
	\vdots & \vdots  & \vdots & &\ddots & \ddots & \ddots \\
	\vdots& \vdots  & \vdots & & & \ddots & \ddots & \ddots \\
		\mathscr{Q}_{M,0}  & 0 &0 &&& \mathscr{Q}_{M,M-1} &\mathscr{Q}_{M,M} & \Tilde{Q} &  &   \\
\mathscr{Q}^{+}  & 0 &0 &&&& \mathscr{Q}_{0} & \mathscr{Q}_{1} & \mathscr{Q}_{2}    &   \\
\vdots  & \vdots & \vdots &&&& &\ddots & \ddots & \ddots       \\

	\end{pmatrix}.$\\
\end{center}

{\small{
\begin{align*}
& \textbf {Upper  Diagonal :}\\
& \mathscr{Q}_{\mathpzc{l},\mathpzc{l}+1} = \text{diag}\{ X_\mathpzc{l}(0),X_\mathpzc{l}(1), \ldots, X_{\mathpzc{l}}(S)\} + \text{diag}^+\{ \hat{X}_\mathpzc{l}(0), \hat{X}_\mathpzc{l}(1),\ldots,\hat{X}_{\mathpzc{l}}(S-1)\};\mathpzc{l} \geq 0,\\
	&  X_\mathpzc{l}(\kappa_1) = \textrm{diag}\{X_\mathpzc{l}(\kappa_1,\kappa_2)\};\kappa_1 = \overline{0,S}, \kappa_2 = \overline{0,S-\kappa_1},\\
  & X_0(\kappa_1,\kappa_2) = 
  \begin{cases}
col(X_0(0,0,j));\mathfrak{j} = \overline{0,S} ,\\
      col(X_0(\kappa_1,\kappa_2,0),X_0(\kappa_1,\kappa_2,S)); \\ \forall S=K \text{or}~ K<S ~~\&~~ \kappa_1+\kappa_2 \leq K,\\
      X_0(\kappa_1,\kappa_2,0);  \text{$\forall K<S \& \kappa_1+\kappa_2 > K$,}
  \end{cases} X_0(\kappa_1,\kappa_2,\mathfrak{j}) = \begin{cases}
X_0(\kappa_1,\kappa_2,\mathfrak{j},0); 
 \text{$\forall \mathfrak{j} = \overline{0,S-1}$},\\
col(X_0(\kappa_1,\kappa_2,S,i));\forall i=\overline{0,K-\kappa_1-\kappa_2},  \end{cases} \\
      & \hat{X}_\mathpzc{l}(\kappa_1)  =
  \begin{pmatrix}
      \hat{X}_\mathpzc{l}(\kappa_1,0)&  & \\
      \hat{X}_\mathpzc{l}(\kappa_1,1)&  & \\
      & \ddots & \\
       & & \\
      & & \hat{X}_\mathpzc{l}(\kappa_1,S-\kappa_1)
  \end{pmatrix}; \kappa_1 = \overline{0,S-1},
  \hat{X}_0(\kappa_1,\kappa_2) = 
  \begin{cases}
  col(\hat{X}_0(0,0,\mathfrak{j})); \forall \mathfrak{j}=\overline{0,S},\\
      col(\hat{X}_0(\kappa_1,\kappa_2,0),\hat{X}_0(\kappa_1,\kappa_2,S)); \\ \forall S=K  ~\text{or}~~ K<S  ~~\&~~ \kappa_1+\kappa_2 \leq K,\\
      \hat{X}_0(\kappa_1,\kappa_2,0);  \text{$\forall K<S \& \kappa_1+\kappa_2 > K$,}
  \end{cases}\\& \hat{X}_0(\kappa_1,\kappa_2,\mathfrak{j}) = \begin{cases}
  \hat{X}_0(\kappa_1,\kappa_2,\mathfrak{j},0); 
 \text{$\forall \mathfrak{j} = \overline{0,S-1}$},\\
col(\hat{X}_0(\kappa_1,\kappa_2,S,i)); \forall  i = \overline{0,K-\kappa_1-\kappa_2},   
  \end{cases} \\
&  X_\mathpzc{l}(\kappa_1,\kappa_2) = X_\mathpzc{l}(\kappa_1,\kappa_2,0)= X_\mathpzc{l}(\kappa_1,\kappa_2,0,0);\forall \mathpzc{l} \geq 1,  ~~\hat{X}_\mathpzc{l}(\kappa_1,\kappa_2) = \hat{X}_\mathpzc{l}(\kappa_1,\kappa_2,0)= \hat{X}_\mathpzc{l}(\kappa_1,\kappa_2,0,0);\forall \mathpzc{l} \geq 1,\\ 
& X_\mathpzc{l}(\kappa_1,\kappa_2,\mathfrak{j},i) =  C_{\mathcal{N}} \otimes I_{L_2T^{M_{\mathcal{H}}}_{\kappa_1}T^{M_{\mathcal{N}}}_{\kappa_2}} \otimes P_\mathpzc{l}(\gamma);  \mathpzc{l} \geq 0,\kappa_1  = \overline{0,S}, \kappa_2 = S-\kappa_1, \mathfrak{j}=i=0, \\
& \hat{X}_\mathpzc{l}(\kappa_1,\kappa_2,\mathfrak{j},i) =   C_{\mathcal{H}} \otimes I_{L_2} \otimes P_{\kappa_1}(\beta_{\mathcal{H}}) \otimes I_{(T^{M_{\mathcal{N}}}_{\kappa_2} \times T^{M_{\mathcal{N}}}_{\kappa_2-1})} \otimes P_\mathpzc{l}(\gamma); \mathpzc{l} \geq 0, \kappa_1  = \overline{0,S-1}, \kappa_2 = S-\kappa_1, \mathfrak{j}=i=0. \\
&  \Tilde{Q} = \mathscr{Q}_{M,M+1}, \Tilde{Q}= \mathscr{Q}_{2},\\
& X_{M}(\kappa_1,\kappa_2,\mathfrak{j},i) =  C_{\mathcal{N}} \otimes I_{L_2T^{M_{\mathcal{H}}}_{\kappa_1}T^{M_{\mathcal{N}}}_{\kappa_2}} \otimes P_{M}^{'}(\gamma);\kappa_1  = \overline{0,S}, \kappa_2 = S-\kappa_1, \mathfrak{j}=i=0, \\
& \hat{X}_{M}(\kappa_1,\kappa_2,\mathfrak{j},i) =   C_{\mathcal{H}} \otimes I_{L_2} \otimes P_{\kappa_1}(\beta_{\mathcal{H}}) \otimes I_{(T^{M_{\mathcal{N}}}_{\kappa_2} \times T^{M_{\mathcal{N}}}_{\kappa_2-1})} \otimes P_{M}^{'}(\gamma);  \kappa_1  = \overline{0,S-1}, \kappa_2 = S-\kappa_1, \mathfrak{j}=i=0. \\
& \text{where}~P_{M}^{'}(\gamma) ~\text{is}~ T^{N}_{M} ~\text{order square matrix.}\\
   & \textbf {Lower  Diagonal :}\\
& \mathscr{Q}^{'} =  \mathscr{Q}_{\mathpzc{l},\mathpzc{l}-1} + \mathscr{Q}^{'}_{1,0}, ~~\mathscr{Q}_{\mathpzc{l},\mathpzc{l}-1} = \text{diag}\{Z_{\mathpzc{l}}(0), Z_{\mathpzc{l}}(1), \ldots, Z_{\mathpzc{l}}(S)\}; \mathpzc{l} \geq 1,\\
&Z_{\mathpzc{l}}(\kappa_1) = \text{diag}\{ Z_{\mathpzc{l}}(\kappa_1,0),Z_{\mathpzc{l}}(\kappa_1,1), \ldots, Z_{\mathpzc{l}}(\kappa_1,S-\kappa_1)\}  + \text{diag}^+\{ \hat{Z}_{\mathpzc{l}}(\kappa_1,0), \hat{Z}_{\mathpzc{l}}(\kappa_1,1),\ldots,\hat{Z}_{\mathpzc{l}}(\kappa_1,S-\kappa_1-1)\}; \kappa_1 = \overline{0,S},\\
& Z_1(\kappa_1,\kappa_2) = 
  \begin{cases}
    row(Z_1(0,0,\mathfrak{j}));\forall \mathfrak{j}=\overline{0,S},\\
    row(Z_1(\kappa_1,\kappa_2,0),Z_1(\kappa_1,\kappa_2,S));\\  \forall S=K  \text{or} ~K<S ~\&~ \kappa_1+\kappa_2 \leq K,\\
      Z_1(\kappa_1,\kappa_2,0);  \text{$\forall K<S ~\&~ \kappa_1+\kappa_2> K$,}
  \end{cases} Z_1(\kappa_1,\kappa_2,\mathfrak{j}) = \begin{cases}
Z_1(\kappa_1,\kappa_2,\mathfrak{j},0); 
 \text{$\forall \mathfrak{j} = \overline{0,S-1}$},\\
row(Z_1(\kappa_1,\kappa_2,S,i));i=\overline{0,K-\kappa_1-\kappa_2},
  \end{cases} \\
&   \hat{Z}_1(\kappa_1,\kappa_2) = 
  \begin{cases}
      row(\hat{Z}_1(\kappa_1,\kappa_2,0),\hat{Z}_1(\kappa_1,\kappa_2,S)); \\ \forall S=K \text{or}~K<S ~\&~ \kappa_1+\kappa_2 < K,\\
      \hat{Z}_1(\kappa_1,\kappa_2,0);  \text{$\forall K<S ~\&~ \kappa_1+\kappa_2 \geq K$,}
  \end{cases} \hat{Z}_1(\kappa_1,\kappa_2,\mathfrak{j}) = \begin{cases}
\hat{Z}_1(\kappa_1,\kappa_2,\mathfrak{j},0); 
\text{$\forall \mathfrak{j} = \overline{0,S-1}$},\\
row(\hat{Z}_1(\kappa_1,\kappa_2,S,i)); i=\overline{0,K-\kappa_1-\kappa_2},   
  \end{cases} \\
  &  Z_\mathpzc{l}(\kappa_1,\kappa_2) = Z_\mathpzc{l}(\kappa_1,\kappa_2,0)= Z_\mathpzc{l}(\kappa_1,\kappa_2,0,0);\mathpzc{l} \geq 2,~~ \hat{Z}_\mathpzc{l}(\kappa_1,\kappa_2) = \hat{Z}_\mathpzc{l}(\kappa_1,\kappa_2,0)= \hat{Z}_\mathpzc{l}(\kappa_1,\kappa_2,0,0);\mathpzc{l} \geq 2,\\
& Z_\mathpzc{l}(\kappa_1,\kappa_2,\mathfrak{j},i) = 
  I_{L_1L_2T_{\kappa_1}^{M_{\mathcal{H}}}T_{\kappa_2}^{M_{\mathcal{N}}}}\otimes L_\mathpzc{l}^{(1)}(\mathpzc{l},\Tilde{\Gamma_1}); \forall \mathpzc{l} \geq 1,\kappa_1  = \overline{0,S}, \kappa_2 = \overline{0,S-\kappa_1}, \mathfrak{j}=i=0, \\
& \hat{Z}_\mathpzc{l}(\kappa_1,\kappa_2,\mathfrak{j},i) =   I_{L_1L_2T_{\kappa_1}^{M_{\mathcal{H}}}}\otimes P_{\kappa_2}(\beta_{\mathcal{N}}) \otimes L_\mathpzc{l}^{(2)}(\mathpzc{l},\Tilde{\Gamma_2}); \forall \mathpzc{l} \geq 1, \kappa_1  = \overline{0,S-1}, \kappa_2 = \overline{0,S-\kappa_1-1}, \mathfrak{j}=i=0. \\
&  \mathscr{Q}_{2} = \mathscr{Q}_{M,M-1}, \\
& Z_M(\kappa_1,\kappa_2,\mathfrak{j},i) = 
  I_{L_1L_2T_{\kappa_1}^{M_{\mathcal{H}}}T_{\kappa_2}^{M_{\mathcal{N}}}}\otimes L_M^{(1)'}(M,\Tilde{\Gamma_1}); \forall \kappa_1  = \overline{0,S}, \kappa_2 = \overline{0,S-\kappa_1}, \mathfrak{j}=i=0, \\
& \hat{Z}_M(\kappa_1,\kappa_2,\mathfrak{j},i) =   I_{L_1L_2T_{\kappa_1}^{M_{\mathcal{H}}}}\otimes P_{\kappa_2}(\beta_{\mathcal{N}}) \otimes L_M^{(2)'}(M,\Tilde{\Gamma_2}); \forall  \kappa_1  = \overline{0,S-1}, \kappa_2 = \overline{0,S-\kappa_1-1}, \mathfrak{j}=i=0. \\
& \text{where}~L_M^{(1)'}(M,\Tilde{\Gamma_1}),~L_M^{(2)'}(M,\Tilde{\Gamma_2}) ~\text{are}~ T^{N}_{M} ~\text{order square matrices.}\\
 & \textbf {Main  Diagonal :}\\
&\mathscr{Q}_{\mathpzc{l},\mathpzc{l}} = \text{diag}\{ Y_{\mathpzc{l}}(0),Y_{\mathpzc{l}}(1), \ldots, Y_{\mathpzc{l}}(S)\} + \text{diag}^+\{ \hat{Y}_{\mathpzc{l}}(0), \hat{Y}_{\mathpzc{l}}(1),\ldots,\hat{Y}_{\mathpzc{l}}(S-1)\} + \text{diag}^-\{ \Bar{Y}_{\mathpzc{l}}(1), \Bar{Y}_{\mathpzc{l}}(2),\ldots,\Bar{Y}_{\mathpzc{l}}(S)\}; \mathpzc{l} \geq 0,\\
&Y_{\mathpzc{l}}(\kappa_1) = \text{diag}\{ Y_{\mathpzc{l}}(\kappa_1,0),Y_{\mathpzc{l}}(\kappa_1,1), \ldots, Y_{\mathpzc{l}}(\kappa_1,S-\kappa_1)\} + \text{diag}^+\{ N_{\mathpzc{l}}(\kappa_1,0), N_{\mathpzc{l}}(\kappa_1,1),\ldots,N_{\mathpzc{l}}(\kappa_1,S-\kappa_1-1)\}\\
&~~~~~~~~~~~~+ \text{diag}^-\{ S^N_{\mathpzc{l}}(\kappa_1,1), S^N_{\mathpzc{l}}(\kappa_1,2),\ldots,S^N_{\mathpzc{l}}(\kappa_1,S-\kappa_1)\}; \kappa_1 = \overline{0,S},\\
 & Y_0(\kappa_1,\kappa_2) = 
  \begin{cases}
    \text{diag}\{ Y_0(0,0,0),Y_0(0,0,1), \ldots, Y_0(0,0,S)\} 
    + \text{diag}^-\{ R_0(0,0,1), R_0(0,0,2),\ldots,R_0(0,0,S)\},\\
      \text{diag}\{ Y_0(\kappa_1,\kappa_2,0),Y_0(\kappa_1,\kappa_2,S)\};  \text{$\forall S=K$ or $K<S ~\&~ \kappa_1+\kappa_2 \leq K$,}\\
      Y_0(\kappa_1,\kappa_2,0); \text{$\forall K<S ~\&~ \kappa_1+\kappa_2 > K$,}
  \end{cases}\\
  	& Y_0(\kappa_1,\kappa_2,\mathfrak{j}) = \begin{cases}
Y_0(\kappa_1,\kappa_2,\mathfrak{j},0); 
 \text{$\forall \mathfrak{j} = \overline{0,S-1}$},\\
\text{diag}\{ Y_0(\kappa_1,\kappa_2,S,0), \ldots, Y_0(\kappa_1,\kappa_2,S,K-\kappa_1-\kappa_2)\} + \text{diag}^-\{ S^E_0(\kappa_1,\kappa_2,S,1),\ldots, S^E_0(\kappa_1,\kappa_2,S,K-\kappa_1-\kappa_2)\}, \\
    + \text{diag}^+\{ E_0(\kappa_1,\kappa_2,S,0), \ldots, E_0(\kappa_1,\kappa_2,S,K-\kappa_1-\kappa_2-1)\}
    \\
   \end{cases} \\
  & N_0(0,0) =
  \begin{pmatrix}
      N_0(0,0,0) & N_0(0,0,1) &  \cdots & N_0(0,0,S-1) & 0\\
       0&0  &  \cdots & 0 & N_0(0,0,S)
  \end{pmatrix}^T,\\
  	 & N_0(\kappa_1,\kappa_2) = 
  \begin{cases}
  \text{diag}\{ N_0(\kappa_1,\kappa_2,0),N_0(\kappa_1,\kappa_2,S)\};  \text{$\forall S=K$ or $K<S ~\&~ \kappa_1+\kappa_2 < K$,} \\
       col( N_0(\kappa_1,\kappa_2,0) , 0)
  ; \text{$\forall  K<S~\&~ \kappa_1+\kappa_2 = K$,}\\
  N_0(\kappa_1,\kappa_2,0);\text{$\forall  K<S~\&~ \kappa_1+\kappa_2 > K$,} 
  \end{cases}\\
  	& N_0(\kappa_1,\kappa_2,\mathfrak{j}) = \begin{cases}
N_0(\kappa_1,\kappa_2,\mathfrak{j},0); 
 \text{$\forall \mathfrak{j} = \overline{0,S-1}$},\\
\begin{pmatrix}
      N_0(\kappa_1,\kappa_2,S,0) &  & \\
       &  \ddots & \\
      & & N_0(\kappa_1,\kappa_2,S,K-\kappa_1-\kappa_2-1)\\
      &&0
  \end{pmatrix}, 
  \end{cases} \\
   & S^N_0(\kappa_1,\kappa_2) = 
  \begin{cases}
  \begin{pmatrix}
      S^N_0(0,1,0) & S^N_0(0,1,1) &  \cdots& S^N_0(0,1,S-1)&0  \\
      0 &0 &   \cdots &0 &S^N_0(0,1,S)
  \end{pmatrix},\\
      \text{diag}\{ S^N_0(\kappa_1,\kappa_2,0),S^N_0(\kappa_1,\kappa_2,S)\};  \text{$\forall S=K$ or $K<S ~\&~\kappa_1+\kappa_2 \leq K$,}\\
       row(S^N_0(\kappa_1,\kappa_2,0),0);  \text{$\forall K<S~\&~ \kappa_1+\kappa_2 = K+1$,}\\
      S^N_0(\kappa_1,\kappa_2,0);  \text{$\forall K<S~\&~ \kappa_1+\kappa_2 > K+1$,}
  \end{cases}\\
  	& S^N_0(\kappa_1,\kappa_2,\mathfrak{j}) = \begin{cases}
S^N_0(\kappa_1,\kappa_2,\mathfrak{j},0); 
 \text{$\forall \mathfrak{j} = \overline{0,S-1}$},\\
\begin{pmatrix}
      S^N_0(\kappa_1,\kappa_2,S,0) &  & &\\
       &\ddots &  & \\
      & & S^N_0(\kappa_1,\kappa_2,S,K-\kappa_1-\kappa_2)&  S^N_0(\kappa_1,\kappa_2,S,K-\kappa_1-\kappa_2+1)
  \end{pmatrix}
  \end{cases} \\
  &  Y_\mathpzc{l}(\kappa_1,\kappa_2) = Y_\mathpzc{l}(\kappa_1,\kappa_2,0)= Y_\mathpzc{l}(\kappa_1,\kappa_2,0,0);\mathpzc{l} \geq 1,
     N_\mathpzc{l}(\kappa_1,\kappa_2) = N_\mathpzc{l}(\kappa_1,\kappa_2,0)= N_\mathpzc{l}(\kappa_1,\kappa_2,0,0);\mathpzc{l} \geq 1,\\
   &  S^N_\mathpzc{l}(\kappa_1,\kappa_2) = S^N_\mathpzc{l}(\kappa_1,\kappa_2,0)= S^N_\mathpzc{l}(\kappa_1,\kappa_2,0,0);\mathpzc{l} \geq 1,\\
 &  N_\mathpzc{l}(\kappa_1,\kappa_2,\mathfrak{j},i) =
 \begin{cases}
   C_{\mathcal{N}} \otimes I_{L_2T_{\kappa_1}^{M_{\mathcal{H}}}} \otimes P_{\kappa_2}(\beta_{\mathcal{N}}) \otimes  I_{ T_{\mathpzc{l}}^{N}};  \kappa_1 = \overline{0,S-1}, \kappa_2 = \overline{0,S-\kappa_1-1},  \mathfrak{j} =i=0, \mathpzc{l} \geq 0,\\
C_{\mathcal{N}} \otimes I_{L_2T_{\kappa_1}^{M_{\mathcal{H}}}} \otimes P_{\kappa_2}(\beta_{\mathcal{N}}) \otimes  I_{T^{R}_{\mathfrak{j}}T^{\mathcal{E}}_{i} T_{\mathpzc{l}}^{N}};  \kappa_1 = \overline{0,S-1}, \kappa_2 = \overline{0,S-\kappa_1-1},  \mathfrak{j} =S, i = \overline{0,K-\kappa_1-\kappa_2-1}, \mathpzc{l} \geq 0,\\
\end{cases}\\
 & S^N_\mathpzc{l}(\kappa_1,\kappa_2,\mathfrak{j},i) =
 \begin{cases}
 I_{L_1L_2T_{\kappa_1}^{M_{\mathcal{H}}}} \otimes L_{S-(\kappa_1+\kappa_2)}(S-\kappa_1, \Tilde{A_{\mathcal{N}}}) \otimes I_{T_{\mathpzc{l}}^{N}};  \kappa_1 = \overline{0,S-1}, \kappa_2 = \overline{1,S-\kappa_1},  \mathfrak{j} =  i =0, \mathpzc{l} \geq 0,\\  
  I_{L_1L_2T_{\kappa_1}^{M_{\mathcal{H}}}} \otimes L_{K-(\kappa_1+\kappa_2+i)}(K-\kappa_1-i, \Tilde{A_{\mathcal{N}}}) \otimes I_{T^{\mathcal{E}}_{i}T^{R}_{S} };  \kappa_1 = \overline{0,S-1}, \kappa_2 = \overline{1,S-\kappa_1}, \mathfrak{j} = S, i = \overline{0,K-\kappa_1-\kappa_2},\\
  C_{\mathcal{E}} \otimes I_{L_2T_{\kappa_1}^{M_{\mathcal{H}}}(T_{\kappa_2}^{M_{\mathcal{N}}}\times T_{\kappa_2-1}^{M_{\mathcal{N}}})} \otimes P_i(\beta_\mathcal{E}) \otimes I_{T_{S}^{R}};\kappa_1 = \overline{0,S-1}, \kappa_2 = \overline{1,S-\kappa_1}, \mathfrak{j} = S, i = K-\kappa_1-\kappa_2,
    \end{cases}\\
   &E_0(\kappa_1,\kappa_2,S,i) = C_{\mathcal{E}} \otimes I_{L_2T_{\kappa_1}^{M_{\mathcal{H}}}T_{\kappa_2}^{M_{\mathcal{N}}}} \otimes P_{i}(\beta_{\mathcal{E}})\otimes  I_{T^{R}_{\mathfrak{j}}};  \kappa_1 = \overline{0,S}, \kappa_2 = \overline{0,S-\kappa_1},  i = \overline{0,K-\kappa_1-\kappa_2-1},\\
   & S^E_0(\kappa_1,\kappa_2,S,i) = I_{L_1L_2T_{\kappa_1}^{M_{\mathcal{H}}}T_{\kappa_1}^{M_{\mathcal{N}}}} \otimes L_{K-(\kappa_1+\kappa_2+i)}(K-\kappa_1-\kappa_2, \Tilde{A_{\mathcal{E}}}) \otimes I_{T^{R}_S};  \kappa_1 = \overline{0,S}, \kappa_2 = \overline{0,S-\kappa_1},  i = \overline{1,K-\kappa_1-\kappa_2},\\
   & R_0(0,0,\mathfrak{j})=R_0(0,0,\mathfrak{j},0)=  I_{L_1L_2} \otimes L_{S-\mathfrak{j}}(S, \Tilde{B});  \mathfrak{j} = \overline{1,S},\\
   & Y_\mathpzc{l}(\kappa_1,\kappa_2,\mathfrak{j},i) = 
 \begin{cases}
  & C_0 \oplus D(1)+ \Delta;\kappa_1=\kappa_2=\mathfrak{j}=i=\mathpzc{l}=0,\\ &  (C_0+C_{\mathcal{N}}+C_{\mathcal{H}}) \oplus D(1) \oplus A_{\mathfrak{j}}(S,B) + \Delta;\kappa_1=\kappa_2=i=\mathpzc{l}=0,\mathfrak{j}= \overline{1,S-1},\\
   & C_0 \oplus D(1)\oplus A_{\kappa_1}(S-\kappa_2,A_{\mathcal{H}}) \oplus A_{\kappa_2}(S-\kappa_1,A_{\mathcal{N}}) \oplus A_{i}(K-\kappa_1-\kappa_2,A_{\mathcal{E}})\oplus A_{\mathfrak{j}}(S,B)+ \Delta; \\ & \kappa_1=\overline{0,S}, \kappa_2 = \overline{0,S-\kappa_1} \mathpzc{l}=0, \mathfrak{j}=S,i= \overline{1,K-\kappa_1-\kappa_2},\\
 & (C_0+C_{\mathcal{N}}+C_{\mathcal{H}}+C_{\mathcal{E}}) \oplus D(1)  \oplus  A_{i}(K-\kappa_1-\kappa_2,A_{\mathcal{E}})\oplus A_{\mathfrak{j}}(S,B)+ \Delta;  \\ & \kappa_1=0, \kappa_2 =0, \mathpzc{l}=0, \mathfrak{j}=S,i= K-\kappa_1-\kappa_2,\\
 & (C_0+C_{\mathcal{N}}+C_{\mathcal{H}}) \oplus D(1) \oplus A_{\kappa_1}(K-\kappa_2-i,A_{\mathcal{H}}) \oplus A_{\kappa_2}(K-\kappa_1-i,A_{\mathcal{N}}) \oplus A_{i}(K-\kappa_1-\kappa_2,A_{\mathcal{E}}) \\ & \oplus A_{\mathfrak{j}}(S,B)+ \Delta;   \kappa_1=\overline{0,S}, \kappa_2 = \overline{0,S-\kappa_1}, \mathpzc{l}=0, \mathfrak{j}=S,i= \overline{1,K-\kappa_1-\kappa_2},\\
& C_0 \oplus D_0 \oplus A_{\kappa_1}(S-\kappa_2,A_{\mathcal{H}}) \oplus A_{\kappa_2}(S-\kappa_1,A_{\mathcal{N}}) \oplus  A_{\mathpzc{l}}(\mathpzc{l},\Gamma)+ \Delta;\\
& \kappa_1=\overline{0,S}, \kappa_2 = \overline{0,S-\kappa_1-1}, \mathpzc{l} \geq 0, \mathfrak{j}=i= 0,\\
& (C_0+C_{\mathcal{H}}) \oplus D_0 \oplus A_{\kappa_1}(S-\kappa_2,A_{\mathcal{H}}) \oplus A_{\kappa_2}(S-\kappa_1,A_{\mathcal{N}}) \oplus  A_{\mathpzc{l}}(\mathpzc{l},\Gamma)\\ & +  I_{L_2T_{\kappa_1}^{M_{\mathcal{H}}}T_{\kappa_2}^{M_{\mathcal{N}}}} \otimes L_{\mathpzc{l}}^{(2)}(\mathpzc{l},\Tilde{\Gamma_2})P_{\mathpzc{l}}(\gamma)+ \Delta^{'};   \kappa_1=\overline{0,S}, \kappa_2 = S-\kappa_1, \mathpzc{l} \geq 0, \mathfrak{j}=i= 0,
 \end{cases}\\
 & \Delta = \text{diag}\{I_{L_1L_2} \otimes \Delta^{(\kappa_1, \kappa_2, \mathfrak{j}, i)} \}, \Delta^{'} = \text{diag}\{I_{L_1L_2} \otimes \Delta^{(\kappa_1, \kappa_2, \mathfrak{j}, i)^{'}} \},  \Delta^{(\kappa_1, \kappa_2, \mathfrak{j}, i)^{'}} = \Delta^{(\kappa_1, \kappa_2, \mathfrak{j}, i)} -\text{diag}\{ [I_{L_2T_{\kappa_1}^{M_{\mathcal{H}}}T_{\kappa_2}^{M_{\mathcal{N}}}} \otimes L_{\mathpzc{l}}^{(2)}(\mathpzc{l},\Tilde{\Gamma_2})P_{\mathpzc{l}}(\gamma)]e\}, \\ &\Delta^{(\kappa_1, \kappa_2, \mathfrak{j}, i)} = -\text{diag}\{ [A_{\kappa_1}(S-\kappa_2,A_{\mathcal{H}}) \oplus A_{\kappa_2}(S-\kappa_1,A_{\mathcal{N}}) \oplus  A_{i}(K-\kappa_1-\kappa_2,A_{\mathcal{E}}) \oplus A_{\mathfrak{j}}(S,B) \oplus A_{\mathpzc{l}}(\mathpzc{l},\Gamma)]e\}, \\
    & \hat{Y}_{\mathpzc{l}}(\kappa_1)=
   \begin{pmatrix}
      \hat{Y}_{\mathpzc{l}}(\kappa_1,0) &  & \\
    & \hat{Y}_{\mathpzc{l}}(\kappa_1,1) & \\
       &  &  \\
       &\ddots&\\
       &&\\
      & & \hat{Y}_{\mathpzc{l}}(\kappa_1,S-\kappa_1-1)\\
      &&0
  \end{pmatrix},
   \hat{Y}_0(\kappa_1,\kappa_2) = 
  \begin{cases}
 \begin{pmatrix}
      \hat{Y}_0(0,0,0)  &\cdots & \hat{Y}_0(0,0,S-1)&0\\
      0  &   \cdots &0   &  \hat{Y}_0(0,0,S)
  \end{pmatrix}^T,\\
      \text{diag}\{ \hat{Y}_0(\kappa_1,\kappa_2,0),\hat{Y}_0(\kappa_1,\kappa_2,S)\};\\  \forall S=K  ~\text{or}~ K<S~\&~ \kappa_1+\kappa_2 < K,\\
     col(\hat{Y}_0(\kappa_1,\kappa_2,0),\hat{Y}_0(\kappa_1,\kappa_2,S)); \\ \forall S=K  ~\text{or} ~~K<S~\&~ \kappa_1+\kappa_2 = K,\\
     \hat{Y}_0(\kappa_1,\kappa_2,0);\\  \forall S=K ~ \text{or}~ K<S~\&~ \kappa_1+\kappa_2 > K,
     \end{cases}\\
   	& \hat{Y}_0(\kappa_1,\kappa_2,\mathfrak{j}) = \begin{cases}
\hat{Y}_0(\kappa_1,\kappa_2,\mathfrak{j},0); 
 \text{$\forall \mathfrak{j} = \overline{0,S-1}$},\\
\begin{pmatrix}
      \hat{Y}_0(\kappa_1,\kappa_2,S,0) &  & \\
       &  \ddots&  \\
      & & \hat{Y}_0(\kappa_1,\kappa_2,S,K-\kappa_1-\kappa_2-1)\\
      &&0
  \end{pmatrix},
  \end{cases} \\
   &  \hat{Y}_\mathpzc{l}(\kappa_1,\kappa_2) = \hat{Y}_\mathpzc{l}(\kappa_1,\kappa_2,0)= \hat{Y}_\mathpzc{l}(\kappa_1,\kappa_2,0,0);\mathpzc{l} \geq 1,\\
  & \hat{Y}_\mathpzc{l}(\kappa_1,\kappa_2,\mathfrak{j},i) =
  \begin{cases}
    C_{\mathcal{H}} \otimes I_{L_2} \otimes P_{\kappa_1}(\beta_{\mathcal{H}}) \otimes I_{T_{\kappa_2}^{M_{\mathcal{N}}} T_{\mathpzc{l}}^{N}};  \kappa_1 = \overline{0,S-1}, \kappa_2 = \overline{0,S-\kappa_1-1}, \mathfrak{j} = i=0, \mathpzc{l} \geq 1,\\
    C_{\mathcal{H}} \otimes I_{L_2} \otimes P_{\kappa_1}(\beta_{\mathcal{H}}) \otimes I_{T_{\kappa_2}^{M_{\mathcal{N}}}T^{\mathcal{E}}_{i}T^{R}_{\mathfrak{j}} T_{\mathpzc{l}}^{N}};  \kappa_1 = \overline{0,S-1}, \kappa_2 = \overline{0,S-\kappa_1-1}, \mathfrak{j} = S, i = \overline{0,K-\kappa_1-\kappa_2}, \mathpzc{l} = 0,\\
  \end{cases}
  	 & \Bar{Y}_{\mathpzc{l}}(\kappa_1)=
   \begin{pmatrix}
      \Bar{Y}_{\mathpzc{l}}(\kappa_1,0) &  & &\\
     & \Bar{Y}_{\mathpzc{l}}(\kappa_1,1) & &\\
       & \ddots &  & \\
      & & \Bar{Y}_{\mathpzc{l}}(\kappa_1,S-\kappa_1) & 0
  \end{pmatrix},\\ &
    \Bar{Y}_0(\kappa_1,\kappa_2) = 
  \begin{cases}
 \begin{pmatrix}
      \Bar{Y}_0(1,0,0)   & \cdots& \Bar{Y}_0(1,0,S-1)&0 \\
       0&  \cdots&0& \Bar{Y}_0(1,0,S)
  \end{pmatrix},\\
      \text{diag}\{ \Bar{Y}_0(\kappa_1,\kappa_2,0),\Bar{Y}_0(\kappa_1,\kappa_2,S)\};  \text{$\forall S=K$ or $K<S~\&~ \kappa_1+\kappa_2 \leq K$,}\\
     row( \Bar{Y}_0(\kappa_1,\kappa_2,0),\Bar{Y}_0(\kappa_1,\kappa_2,S));  \text{$\forall S=K$ or $K<S ~\&~ \kappa_1+\kappa_2 = K+1$,}\\
      \Bar{Y}_0(\kappa_1,\kappa_2,0); \text{$\forall K<S~\&~ \kappa_1+\kappa_2 > K+1$,}
  \end{cases}\\
   	& \Bar{Y}_0(\kappa_1,\kappa_2,\mathfrak{j}) = \begin{cases}
\Bar{Y}_0(\kappa_1,\kappa_2,\mathfrak{j},0); ~~~ \text{$\forall \mathfrak{j} = \overline{0,S-1}$},\\
\begin{pmatrix}
      \Bar{Y}_0(\kappa_1,\kappa_2,S,0) &  & &\\
       & \ddots &  & \\
      & & \Bar{Y}_0(\kappa_1,\kappa_2,S,K-\kappa_1-\kappa_2-1)&  \Bar{Y}_0(\kappa_1,\kappa_2,S,K-\kappa_1-\kappa_2)
  \end{pmatrix}; \kappa_1\leq K,\kappa_2=0, \\
  \begin{pmatrix}
      \Bar{Y}_0(\kappa_1,\kappa_2,S,0) &  & &\\
       & \ddots &  & \\
      & & \Bar{Y}_0(\kappa_1,\kappa_2,S,K-\kappa_1-\kappa_2-1)&  0
  \end{pmatrix},  \\
  \end{cases} \\
  	 &  \Bar{Y}_\mathpzc{l}(\kappa_1,\kappa_2) = \Bar{Y}_\mathpzc{l}(\kappa_1,\kappa_2,0)= \Bar{Y}_\mathpzc{l}(\kappa_1,\kappa_2,0,0);\forall \mathpzc{l} \geq 1,\\
   & \Bar{Y}_\mathpzc{l}(\kappa_1,\kappa_2,0,0) =  
      I_{L_1L_2} \otimes L_{S-(\kappa_1+\kappa_2)}(S-\kappa_2, \Tilde{A}_{\mathcal{H}}) \otimes I_{T_{\kappa_2}^{M_{\mathcal{N}}}T_{\mathpzc{l}}^{N}}; \forall \kappa_1 = \overline{1,S}, \kappa_2 = \overline{0,S-\kappa_1},   \mathpzc{l} \geq 0,\\
   & \Bar{Y}_0(\kappa_1,\kappa_2,S,i) = I_{L_1L_2} \otimes L_{K-(\kappa_1+\kappa_2+i)}(K-\kappa_2-i, \Tilde{A}_{\mathcal{H}}) \otimes I_{T_{\kappa_2}^{M_{\mathcal{N}}}T^{\mathcal{E}}_{i}T^{R}_{S}};\forall  \kappa_1 = \overline{1,S}, \kappa_2 = \overline{0,S-\kappa_1},  i = \overline{0,K-\kappa_1-\kappa_2},\\
   & \Bar{Y}_0(\kappa_1,0,S,K-\kappa_1-\kappa_2) = C_{\mathcal{E}}\otimes I_{L_2(T_{\kappa_1}^{M_{\mathcal{H}}}\times T_{\kappa_1-1}^{M_{\mathcal{H}}})} \otimes P_i(\beta_{\mathcal{E}}) \otimes I_{T_{S}^{R}};\forall \kappa_1 = \overline{1,S}, i=K-\kappa_1.\\
  & \textbf {First column:}\\
    &\mathscr{Q}^{'}_{\mathpzc{l},0} = 
    \begin{pmatrix}
        W_\mathpzc{l}(0)&0&\cdots &0\\
        W_\mathpzc{l}(1)&0&\cdots &0\\
        \vdots&\vdots&\ddots &\vdots \\
        W_\mathpzc{l}(S)&0&\cdots &0
    \end{pmatrix};\mathpzc{l}\geq 0, ~~~~~ W_\mathpzc{l}(\kappa_1) = 
    \begin{pmatrix}
        W_\mathpzc{l}(\kappa_1,0)&0&\cdots &0\\
        W_\mathpzc{l}(\kappa_1,1)&0&\cdots &0\\
        \vdots&\vdots&\ddots& \vdots \\
        W_\mathpzc{l}(\kappa_1,S-\kappa_1)&0&\cdots &0
    \end{pmatrix}; \kappa_1 = \overline{0,S},\\
    & \text{where $\mathscr{Q}^{'}_{\mathpzc{l},0}$ and $ W_\mathpzc{l}(\kappa_1)$ are square matrices of order $S+1$ and $S-\kappa_1+1$, respectively and $\mathscr{Q}^{+} = \mathscr{Q}_{M,0}$,}\\
    & W_0(0,0) = \text{diag}\{W_0(0,0,0),W_0(0,0,1),\ldots,W_0(0,0,S)\},\\ 
    & W_0(\kappa_1,\kappa_2) = 
    \begin{cases}
      \begin{pmatrix}
        W_0(\kappa_1,\kappa_2,0) & W_0(\kappa_1,\kappa_2,1) & \cdots & W_0(\kappa_1,\kappa_2,S-1)\\
      0  & 0 &\cdots & 0 & W_0(\kappa_1,\kappa_2,S) \end{pmatrix}; \\\text{$\forall S=K$ or $K<S~\&~ \kappa_1+\kappa_2 \leq K$,}\\
      row(W_\mathpzc{l}(\kappa_1,\kappa_2,0),W_\mathpzc{l}(\kappa_1,\kappa_2,1),\ldots,W_\mathpzc{l}(\kappa_1,\kappa_2,S));\text{$\forall  K<S~\&~ \kappa_1+\kappa_2 >K$},
    \end{cases}\\
  &   W_0(\kappa_1,\kappa_2,\mathfrak{j}) = 
     \begin{cases}
      W_0(\kappa_1,\kappa_2,\mathfrak{j},0); \forall \mathfrak{j} = \overline{0,S-1},\\
       \text{diag}\{W_0(0,0,S,0),W_0(0,0,S,1),\ldots,W_0(0,0,S,K)\}\\
       \begin{pmatrix}
        W_0(\kappa_1,\kappa_2,S,0) &  &  & \\
         && \ddots &&\\
         &&& W_0(\kappa_1,\kappa_2,S,K-\kappa_1-\kappa_2-1) & 0
        \end{pmatrix}; \text{$\forall \kappa_1+\kappa_2 < K$,}\\
        row\{W_0(\kappa_1,\kappa_2,S,0),W_0(\kappa_1,\kappa_2,S,1),\ldots,W_0(\kappa_1,\kappa_2,S,K-\kappa_1-\kappa_2)\};\text{$\forall  \kappa_1+\kappa_2 \geq K$},
     \end{cases}\\
    & W_\mathpzc{l}(\kappa_1,\kappa_2,\mathfrak{j},0) = I_{L_1} \otimes D_1 \otimes e_{T^{M_{\mathcal{H}}}_{\kappa_1}T^{M_{\mathcal{N}}}_{\kappa_2}T^{M_{\mathcal{E}}}_{i}} \otimes  \Pi P_{\mathfrak{j}}(\alpha) \otimes e_{T^{N}_{\mathpzc{l}}};\forall \kappa_1 = \overline{0,S}, \kappa_2 = \overline{0,S-\kappa_1}, \mathfrak{j} = \overline{1,S}~ \& ~ \mathfrak{j}=\kappa_1+\kappa_2, \mathpzc{l} \geq 0.\\
\end{align*}}}

\subsection{Ergodicity Condition}
The modelled process  \{$\Xi(t), t \geq 0 \}$ clearly has the traits of a level-dependent quasi-birth-death process, as evidenced by its structure. But, the existing definition of $\textit{L\!D\!Q\!B\!D}$ process does not provide any results about the limiting/asymptotic behaviour of process when the countable element of Markov chain tends to infinity. Thus, the existing result of stability of constructive conditions of stability of $\textit{L\!D\!Q\!B\!D}$ process. Though, the behaviour of the proposed process satisfies the conditions imposed on the limiting behaviour of asymptotic quasi- toeplitez Markov chain ($\textrm{\it A\!Q\!T\!M\!C}$). In this order, we will compute matrices $U_0, U_1$ and $U_2$ defined as follows\\
$U_0 = \lim_{\mathpzc{l} \to \infty} T_{\mathpzc{l}}^{-1} \mathscr{Q}_{\mathpzc{l},\mathpzc{l}-1}$, $U_1 = \lim_{\mathpzc{l} \to \infty} T_{\mathpzc{l}}^{-1} \mathscr{Q}_{\mathpzc{l},\mathpzc{l}} + I$, $U_2 = \lim_{\mathpzc{l} \to \infty} T_{\mathpzc{l}}^{-1} \mathscr{Q}_{\mathpzc{l},\mathpzc{l}+1}$,\\
where $T_{\mathpzc{l}}^{-1}$ is the diagonal matrix with diagonal entries defined as the modulus of the diagonal entries of the matrix $\mathscr{Q}_{\mathpzc{l},\mathpzc{l}}, \mathpzc{l} \geq 0.$ In the considered case, the matrices $U_0, U_1$ and $U_2$ have the following form.\\
$U_0 = \lim_{\mathpzc{l} \to \infty} T_{\mathpzc{l}}^{-1} \mathscr{Q}_{\mathpzc{l},\mathpzc{l}-1}= T^{-1}\mathscr{Q}_0, \mathpzc{l}>M$,\\
$U_1 = \lim_{\mathpzc{l} \to \infty} T_{\mathpzc{l}}^{-1} \mathscr{Q}_{\mathpzc{l},\mathpzc{l}} + I= T^{-1}\mathscr{Q}_1,\mathpzc{l}>M$,\\
$U_2 = \lim_{\mathpzc{l} \to \infty} T_{\mathpzc{l}}^{-1} \mathscr{Q}_{\mathpzc{l},\mathpzc{l}+1}= T^{-1}\mathscr{Q}_2,\mathpzc{l}>M$,\\
where $T = \text{diag}\{ T(0), T(1),\ldots, T(S)\}$; $T(\kappa_1) = \text{diag}\{ T(\kappa_1,0), T(\kappa_1,1),\ldots, T(\kappa_1,S-\kappa_1)\}$,
{\small{
\begin{align*}
T(\kappa_1,\kappa_2)=
\begin{cases}
  \Lambda_0 \oplus \sum \oplus A_{\kappa_1}(S-\kappa_2, A_{\mathcal{H}}) \oplus A_{\kappa_2}(S-\kappa_1, A_{\mathcal{N}}) \oplus A_{M}(M, \gamma)
  ;& \kappa_1=\overline{0,S}, \kappa_2= \overline{0,S-\kappa_1-1},\\
     \Lambda_0+C_{\mathcal{H}} \oplus \sum \oplus A_{\kappa_1}(S-\kappa_2, A_{\mathcal{H}}) \oplus A_{\kappa_2}(S-\kappa_1, A_{\mathcal{N}}) \oplus A_{M}(M, \gamma)\\
     + I_{L_1L_2T_{\kappa_1}^{M_{\mathcal{H}}}T_{\kappa_2}^{M_{\mathcal{N}}}}\otimes L_0^{(2)}(M,\Tilde{\Gamma_2})\otimes P_M(\gamma)
  ;& \kappa_1=\overline{0,S}, \kappa_2= S-\kappa_1.\\
  \end{cases}\\
    \end{align*}}}
Here, $\Lambda_0 = -C_0$; $\sum  = -D_0 +D_1.$ Note that, in the following illustration diag, $\text{diag}^+$ and $\text{diag}^-$ represents matrix with main diagonal, upper diagonal and lower diagonal, respectively.  

\begin{theorem}
The necessary and sufficient condition for the ergodicity of the underlying process is the satisfaction of the inequality \\
\begin{align*}
\lambda <  \sum_{\kappa_1=0}^{S} \sum_{\kappa_2=0}^{S-\kappa_1} x_M^{(1)} L_0^{(1)'}(M,\Tilde{\Gamma_1})e  + \sum_{\kappa_1=0}^{S} \sum_{\kappa_2=0}^{S-\kappa_1} x_M^{(2)} L_0^{(2)'}(M,\Tilde{\Gamma_2})e
\end{align*}

where $x_M^{(1)} = x_M(\kappa_1,\kappa_2)(e_{L_2T_{\kappa_1}^{M_{\mathcal{H}}}T_{\kappa_2}^{M_{\mathcal{N}}}}\otimes I_{T^N_M})$, $x_M^{(2)} = x_M(\kappa_1,\kappa_2)(e_{L_2T_{\kappa_1}^{M_{\mathcal{H}}}T_{\kappa_2}^{M_{\mathcal{N}}}}\otimes I_{T^N_M})$ and $x$ is the unique solution of $x(\mathscr{Q}_0+\mathscr{Q}_1+\mathscr{Q}_2)=0; xe=1.$
\end{theorem}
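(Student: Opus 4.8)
The plan is to obtain the criterion from the theory of asymptotically quasi-Toeplitz Markov chains of Klimenok and Dudin \cite{klimenok2006multi}, using the uniformizing normalization $T_{\mathpzc{l}}^{-1}$ introduced above. First I would check that $\{\Xi(t),t\ge 0\}$ indeed belongs to the $\textrm{\it A\!Q\!T\!M\!C}$ class: for $\mathpzc{l}>M$ the retrial rate equals the constant $\theta$ and the retrial-phase component is frozen at the fixed dimension $T^N_M$, so the blocks $\mathscr{Q}_{\mathpzc{l},\mathpzc{l}-1}$, $\mathscr{Q}_{\mathpzc{l},\mathpzc{l}}$, $\mathscr{Q}_{\mathpzc{l},\mathpzc{l}+1}$ stabilize to the level-independent matrices $\mathscr{Q}_0,\mathscr{Q}_1,\mathscr{Q}_2$; hence the limits $U_0=T^{-1}\mathscr{Q}_0$, $U_1=T^{-1}\mathscr{Q}_1+I$, $U_2=T^{-1}\mathscr{Q}_2$ exist, and one verifies that $Y(1):=U_0+U_1+U_2$ is a stochastic, irreducible matrix. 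Irreducibility is inherited from that of the driving $\textrm{\it M\!M\!A\!P}$, $\textrm{\it M\!A\!P}$ and of each $\textrm{\it P\!H}$ generator, together with the fact that the finite collection of channel-occupancy and phase coordinates forms a single communicating class; the remaining regularity hypotheses of \cite{klimenok2006multi} (analyticity of the associated matrix generating function at $z=1$, and the relevant finiteness and non-singularity conditions) hold because every block is finite with entries polynomial in the transition rates.

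With the $\textrm{\it A\!Q\!T\!M\!C}$ membership established, the ergodicity criterion of \cite{klimenok2006multi} states that the process is ergodic if and only if $y\,U_0\,e > y\,U_2\,e$, where $y$ is the unique probability row vector with $y\,Y(1)=y$. Since $Y(1)=T^{-1}(\mathscr{Q}_0+\mathscr{Q}_1+\mathscr{Q}_2)+I$, this fixed-point equation is equivalent to $y\,T^{-1}(\mathscr{Q}_0+\mathscr{Q}_1+\mathscr{Q}_2)=0$, so that the normalized vector $x:=(y\,T^{-1}e)^{-1}\,y\,T^{-1}$ is precisely the unique solution of $x(\mathscr{Q}_0+\mathscr{Q}_1+\mathscr{Q}_2)=0$, $xe=1$, appearing in the statement. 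Because $y$ is a positive multiple of $xT$ and $T^{-1}$ is a positive diagonal matrix, the criterion $y\,U_0\,e>y\,U_2\,e$ reduces to $x\,\mathscr{Q}_0\,e > x\,\mathscr{Q}_2\,e$.

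It then remains to evaluate the two sides explicitly. The upward transitions at levels $\mathpzc{l}>M$ are exactly the events that send a call to the orbit (a new call arriving to a saturated system, and a handoff call preempting an ongoing new call); summing the rows of $\mathscr{Q}_2$ and averaging against $x$ gives the aggregate orbit-input rate, which is the quantity written $\lambda$ on the left-hand side. The downward transitions are the successful-retrial events, i.e.\ the absorptions of the retrial $\textrm{\it P\!H}$ process through its ``retrial-attempt'' exit; carrying out the row sums of $\mathscr{Q}_0$ and grouping the contributions according to the channel-occupancy pair $(\kappa_1,\kappa_2)$ and to the two admissible post-retrial configurations splits the total into the two families governed by $L_0^{(1)'}(M,\tilde{\Gamma_1})$ and $L_0^{(2)'}(M,\tilde{\Gamma_2})$, yielding $x\,\mathscr{Q}_0\,e = \sum_{\kappa_1=0}^{S}\sum_{\kappa_2=0}^{S-\kappa_1} x_M^{(1)} L_0^{(1)'}(M,\tilde{\Gamma_1})e + \sum_{\kappa_1=0}^{S}\sum_{\kappa_2=0}^{S-\kappa_1} x_M^{(2)} L_0^{(2)'}(M,\tilde{\Gamma_2})e$ with $x_M^{(1)}, x_M^{(2)}$ the stated projections of the block $x_M(\kappa_1,\kappa_2)$ of $x$. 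Substituting both evaluations into $x\,\mathscr{Q}_0\,e > x\,\mathscr{Q}_2\,e$ produces the asserted inequality; since the AQTMC criterion is necessary and sufficient (away from the boundary case of equality), so is the condition obtained.

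The principal obstacle I anticipate lies in this last step: within the elaborate CSFP form of the generator one must pin down exactly which entries of $\mathscr{Q}_0$ and $\mathscr{Q}_2$ survive right-multiplication by $e$ and show that their $(\kappa_1,\kappa_2)$-indexed sums collapse onto the compact matrices $L_0^{(1)'}(M,\tilde{\Gamma_1})$, $L_0^{(2)'}(M,\tilde{\Gamma_2})$ and onto the scalar $\lambda$; the working tools here are the Kronecker identities $(A\otimes B)e=Ae\otimes Be$ together with the fact that the diagonal-correction blocks $\Delta$ and $\Delta'$ are annihilated by $e$ up to the retrial-exit terms. A secondary but genuine task is the careful verification of irreducibility and of the analyticity and non-degeneracy hypotheses of the AQTMC theorem for this specific chain, which is conceptually routine but delicate because of the many boundary regimes in the model ($S=K$ versus $K<S$; $\kappa_1+\kappa_2\le K$ versus $\kappa_1+\kappa_2>K$; $\mathfrak{j}=S$; and so on).
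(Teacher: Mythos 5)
Your proposal follows essentially the same route as the paper: both invoke the asymptotically quasi-Toeplitz Markov chain criterion of Klimenok and Dudin, reduce it to the drift inequality $x\mathscr{Q}_2e < x\mathscr{Q}_0e$ with $x(\mathscr{Q}_0+\mathscr{Q}_1+\mathscr{Q}_2)=0$, $xe=1$, and then evaluate the two sides blockwise so that the upward (orbit-input) transitions collapse to $\lambda$ via $\pi C_{\mathcal{N}}e=\lambda_{\mathcal{N}}$, $\pi C_{\mathcal{H}}e=\lambda_{\mathcal{H}}$ and the downward (successful-retrial/departure) transitions collapse onto the $L_0^{(1)'}$ and $L_0^{(2)'}$ sums. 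The only presentational difference is that you pass through the normalized matrices $U_0,U_1,U_2$ and the fixed point $yY(1)=y$ before renormalizing to $x$, whereas the paper states the criterion directly in terms of $x$ and makes the product-form ansatz $x=(\pi\otimes x_M(\kappa_1,\kappa_2))$ explicit; this is the same argument.
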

\begin{proof}
Following \cite{klimenok2006multi}, a necessary and sufficient condition for ergodicity of the underlying process can be formulated in terms of the generator $\mathscr{Q}$ is as follows
\begin{align}
     x\mathscr{Q}_2e < x\mathscr{Q}_0e \label{eq:1}
\end{align}
   where vector $x$ is the unique solution to the system of linear equations 
   \begin{align}
        x(\mathscr{Q}_0+\mathscr{Q}_1+\mathscr{Q}_2)=0; xe=1 \label{eq:2}.
   \end{align}
  Let expression of $x$ be of the form
  $x = (\pi \otimes x_M(\kappa_1,0),\pi \otimes x_M(\kappa_1,1), \ldots, \pi \otimes x_M(\kappa_1,S-\kappa_1))$, where $x_M(\kappa_1,\kappa_2)$ vector is of size $L_2T^{M_{\mathcal{H}}}_{\kappa_1}T^{M_{\mathcal{N}}}_{\kappa_2}T^N_M.$ Substituting the expression of $x$ and block matrices in inequality \ref{eq:1}. 
  \begin{align*}
      & \sum_{\kappa_1=0}^{S}(\pi \otimes x_M(\kappa_1,S-\kappa_1)) (C_{\mathcal{N}} \otimes I_{L_2T_{\kappa_1}^{M_{\mathcal{H}}}T_{\kappa_2}^{M_{\mathcal{N}}}}\otimes P_M^{'}(\gamma) )e\\ 
      & + \sum_{\kappa_1=1}^{S}  (\pi \otimes x_M(\kappa_1,S-\kappa_1)) (C_{\mathcal{H}} \otimes I_{L_2} \otimes P_{\kappa_1}(\beta_{\mathcal{H}}) \otimes I_{T_{\kappa_2}^{M_{\mathcal{N}}}\times 
      T_{\kappa_2-1}^{M_{\mathcal{N}}}}\otimes P_M^{'}(\gamma) )e\\
      & < \sum_{\kappa_1=0}^{S} \sum_{\kappa_2=0}^{S-\kappa_1} (\pi \otimes x_M(\kappa_1,\kappa_2)) ( I_{L_1L_2T_{\kappa_1}^{M_{\mathcal{H}}}T_{\kappa_2}^{M_{\mathcal{N}}}}\otimes L_0^{(1)'}(M,\Tilde{\Gamma_1}) )e\\ 
      & + \sum_{\kappa_1=0}^{S} \sum_{\kappa_2=1}^{S-\kappa_1} (\pi \otimes x_M(\kappa_1,\kappa_2)) ( I_{L_1L_2T_{\kappa_1}^{M_{\mathcal{H}}}} \otimes P_{\kappa_2}(\beta_{\mathcal{N}})\otimes L_0^{(2)'}(M,\Tilde{\Gamma_2}) )e.
  \end{align*}
 using the relations $\pi C_{\mathcal{N}}e = \lambda_{\mathcal{N}},\pi C_{\mathcal{H}}e = \lambda_{\mathcal{H}}, \lambda_{\mathcal{N}}+\lambda_{\mathcal{N}}=\lambda, P_M^{'}(\gamma)e=e, P_{\kappa_1}(\beta_{\mathcal{H}})e=e, P_{\kappa_2}(\beta_{\mathcal{N}})e=e$, where $e$ is a column vector with one of appropriate size, the inequality will be reduced to 

\begin{align*}
\lambda <  \sum_{\kappa_1=0}^{S} \sum_{\kappa_2=0}^{S-\kappa_1} x_M^{(1)} L_0^{(1)'}(M,\Tilde{\Gamma_1})e  + \sum_{\kappa_1=0}^{S} \sum_{\kappa_2=1}^{S-\kappa_1} x_M^{(2)} L_0^{(2)'}(M,\Tilde{\Gamma_2})e.
\end{align*}
Since the stability of the system can not be defined in the catastrophic scenario, thus these conditions are derived for the normal scenario.
System of equations \ref{eq:2} has unique solution because the matrix of the system is an infinitesimal generator of the underlying process which defines joint distributions of the number of retrial calls, number of handoff calls receiving service and number of new calls receiving service. The left hand side of inequality \ref{eq:1}  is the total arrival rate of handoff calls and new calls in the system. In the right hand side, the first summand is the rate of  departure from the system  and second summand is the rate of starting the service for retrial calls when the retrial is successful. It is intuitively clear that the Markov chain describing queueing model under study is ergodic if and only if the total arrival rate is less than the maximum value of the total departure rate and successful retrial rate. When the number of retrial calls increases without bound and the retrial rate tends to infinity, the retrial queueing model approaches to the corresponding classical queueing model for which $\lambda/S\mu$ becomes a necessary and sufficient condition for the stability.
\end{proof}

\subsection{Stationary Distribution}
Let $\displaystyle{{z_s} = \{{z_s}(0), {z_s}(1),{z_s}(2),\ldots, {z_s}(M-1), {z_s}(M), \ldots \}}$ be the steady-state probability vector of generator matrix $\mathscr{Q}$ satisfying $\displaystyle{{z_s} \mathscr{Q} = 0; {z_s} e =1.}$ Here, element ${z_s}(0)$ contains \\  $\displaystyle{1 \times  \Big(\sum_{\mathfrak{j}=0}^{S} \sum_{i=0}^{K} L_1L_2 T^{M_{\mathcal{E}}}_{i}T^R_{\mathfrak{j}}\Big) \Big(\sum_{\kappa_1=0}^{S} \sum_{\kappa_2=0}^{S}\sum_{\mathfrak{j}=0}^{S}  L_1L_2 T^{M_{\mathcal{H}}}_{\kappa_1} T^{M_{\mathcal{N}}}_{\kappa_2}T^R_{S}\Big)} $ vector components and ${z_s}(\mathpzc{l})$ contains \\
 $\displaystyle{1 \times \Big(\sum_{\kappa_1=0}^{S} \sum_{\kappa_2=0}^{S}  L_1L_2 T^{M_{\mathcal{H}}}_{\kappa_1} T^{M_{\mathcal{N}}}_{\kappa_2}T^N_{\mathpzc{l}}\Big)} $ elements; $  \mathpzc{l} \geq 0,~ 0\leq \kappa_1 \leq S,~ 0\leq \kappa_2 \leq S.$
The derived structure of generator matrix lacks of the existing quasi-birth death structure and toeplitz like structure. Therefore, the existing approach for computing the stationary distribution for $\textrm{\it A\!Q\!T\!M\!C}$ can be employed because the Markov chain has a specific asymptotic behaviour. The originally proposed algorithm for $\textrm{\it A\!Q\!T\!M\!C}$ process was modified by Dudin and Dudina \cite{dudin2019retrial} later on. In their proposed approach, they tackled the challenges of larger order matrix computation and storage. Along with the substantial advantages, the proposed algorithm also has some following disadvantage.
\begin{itemize}
    \item In their algorithm, the initial value for starting the computation process has been chosen  randomly. They did not provide any reasoning on how to select the initial point $i_0.$
    \item After each unsuccessful iteration, to obtain a new searching interval, a randomly selected value $s$ has been added in the existing interval. There is no reasoning behind the chosen value $s$ also.
    \item While checking the termination criteria for matrix $F,$ they discarded some portion of the search interval at each failed iteration of Step 4.2 (case 3). They did not provided any clarification for the eliminated portion of the search interval.
    
\end{itemize}

These findings motivated for the development of a modified approach for computing the stationary distribution of the Markov chain under consideration.
\subsubsection{Old Algorithm}
\textbf{Step 1.} Set $i_0,s$ randomly. Fix $\epsilon_g$, and $\epsilon_f$ as accuracy levels of matrices $G_i$ and steady-state vector $z_s(i),$ respectively. 

\textbf{Step 2.1.}  Set $G_{\kappa}^{(1)}=O$ and $G_{\kappa}^{(2)}=I$, $\kappa=i_0-1+s,$ $i_f=-1.$ 

\textbf{Step 2.2.} Compute the matrices as defined
\begin{center}

$G_{\kappa}^{(1)}:= -(\mathscr{Q}_{{\kappa}+1,{\kappa}+1} +\mathscr{Q}_{{\kappa}+1,{\kappa}+2}  G_{{\kappa}+1}^{(1)})^{-1}\mathscr{Q}_{{\kappa}+1,{\kappa}}$,  \\
$G_{\kappa}^{(2)}:= -(\mathscr{Q}_{{\kappa}+1,{\kappa}+1} +\mathscr{Q}_{{\kappa}+1,{\kappa}+2}  G_{{\kappa}+1}^{(2)})^{-1}\mathscr{Q}_{{\kappa}+1,{\kappa}}$.
\end{center}

\textbf{Step 2.3.} Calculate $||G_{\kappa}^{(1)}-G_{\kappa}^{(2)}||.$ There can be three possible cases. 

\textit{case 1:} If $||G_{\kappa}^{(1)}-G_{\kappa}^{(2)}|| < \epsilon_g$. Go to step 3. 

\textit{case 2:} If $||G_{\kappa}^{(1)}-G_{\kappa}^{(2)}|| > \epsilon_g$ and ${\kappa} \geq i_0$. Set ${\kappa}:={\kappa}-1$ and repeat step 2.2.

\textit{case 3:} If $||G_{\kappa}^{(1)}-G_{\kappa}^{(2)}|| > \epsilon_g$ and ${\kappa} = i_0-1$. Set $s:=2s,$ ${\kappa} =i_0-1+2s,$ $i_0={\kappa},$ and repeat step 2.1.

\textbf{Step 3.} Set $G_{\kappa} = G_{\kappa}^{(1)}.$ Compute $G_i, i=(i_f+1,{\kappa}-1)$. Set $B=Q_{{\kappa},0}$ and compute $B = Q_{i,0}+ G_i B, i=(i_f+1,{\kappa}-1)$, and store.

\textbf{Step 4.1.} Set $i=i_f+1.$ If $i=0$, find solution of $z_s(0)B=0; z_s(0)e=1.$ Otherwise start $i=1$ and compute  $z_s(i)= -z_s(i-1)\mathscr{Q}_{i-1,i}(\mathscr{Q}_{i,i} +\mathscr{Q}_{i,i+1}  G_{i})^{-1}$.

\textbf{Step 4.2.} Compute $||z_s(i)||.$ There can be three possible cases

\textit{case 1:} If $||z_s(i)||< \epsilon_f$, set $i^*=i$ and go to Step 5.

\textit{case 2:} If $||z_s(i)||> \epsilon_f$, and $i<{\kappa}$, increase $i$ by one and go to Step 4.1.

\textit{case 3:} If $||z_s(i)||> \epsilon_f$, and $i={\kappa}$, set $i_0={\kappa}+s$, ${\kappa}=i_0-1+2s$ and $i_f={\kappa}$ and go to Step 2.2.

\textbf{Step 5.} Calculate vectors $z_s({\kappa}), k=\overline{0,i^*}$ as $z_s({\kappa}+1) = cz_s({\kappa})$, where $c = \frac{1}{z_s(0)+z_s(1)+z_s(2)+\ldots+z_s(i^*)}$ is a normalizing constant.

\subsubsection{New Algorithm}
\textbf{Step 1.} To fix the initial value $i_0$ (anticipated number of matrices $G_i$, which is required for the computation of the stationary distribution), convert the original model into Poisson-exponentially distributed model, i.e., arrival of calls and catastrophe are defined by Poisson process and rest of the processes follow exponential distribution. For this model, fix a pre-defined small value $\delta>0$ such that $z_s(i_0)< \delta$. Here, $z_s$ is the stationary distribution of the system and $z_s(i_0)$ is $i_0th$ component of stationary distribution. \\
\textit{Explanation:} The termination criteria or the condition $z_s(i_0)< \delta$ represents that for $i \geq i_0,$ the steady-state vector is showing invariant behaviour for a pre-defined small value. This $i_0$ will work as an initial value for the original model. In the original model also, the value of $i_0$ represents the invariant behaviour of matrix $G_i$ for $i \geq i_0$ which ultimately determines the invariant behaviour of steady-state vector. 

\textbf{Step 2.} Fix $s$ as some component of $i_0,$ i.e., $s= mi_0.$ (After many trials in the programming, it's been observed that $s=2i_0$ is appropriate choice for $s$, as the solution converges fast.)\\
\textit{Explanation:} If the Markov chain is ergodic, for any value of $i_0$, and for any fixed small arbitrary value $\epsilon$ there exist a finite number $s,$ the probability that the Markov chain will transit from $i$ to  $i+s$ without visiting state $i-1$ is less than $\epsilon.$

\textbf{Step 3.} Fix $\epsilon_g$, and $\epsilon_f$ as accuracy levels of matrices $G_i$ and steady-state vector $z_s(i),$ respectively. 

\textbf{Step 4.1.} Set $G_{\kappa}^{(1)}=O$ and $G_{\kappa}^{(2)}=I$, $\kappa=i_0-1+s,$ $i_f=-1.$ 

\textbf{Step 4.2.} Compute the matrices as defined
\begin{center}

$G_{\kappa}^{(1)}:= -(\mathscr{Q}_{{\kappa}+1,{\kappa}+1} +\mathscr{Q}_{{\kappa}+1,{\kappa}+2}  G_{{\kappa}+1}^{(1)})^{-1}\mathscr{Q}_{{\kappa}+1,{\kappa}}$,  \\
$G_{\kappa}^{(2)}:= -(\mathscr{Q}_{{\kappa}+1,{\kappa}+1} +\mathscr{Q}_{{\kappa}+1,{\kappa}+2}  G_{{\kappa}+1}^{(2)})^{-1}\mathscr{Q}_{{\kappa}+1,{\kappa}}$.
\end{center}

\textbf{Step 4.3.} Calculate $||G_{\kappa}^{(1)}-G_{\kappa}^{(2)}||.$ There can be three possible cases. 

\textit{case 1:} If $||G_{\kappa}^{(1)}-G_{\kappa}^{(2)}|| < \epsilon_g$. Go to step 5. 

\textit{case 2:} If $||G_{\kappa}^{(1)}-G_{\kappa}^{(2)}|| > \epsilon_g$ and ${\kappa} \geq i_0$. Set ${\kappa}:={\kappa}-1$ and repeat step 4.2.

\textit{case 3:} If $||G_{\kappa}^{(1)}-G_{\kappa}^{(2)}|| > \epsilon_g$ and ${\kappa} = i_0-1$. Set $s:=2s,$ ${\kappa} =i_0-1+2s,$ $i_0={\kappa},$ and repeat step 4.1.

\textbf{Step 5.} Set $G_{\kappa} = G_{\kappa}^{(1)}.$ Compute $G_i, i=(i_f+1,{\kappa}-1)$. Set $B=Q_{{\kappa},0}$ and compute $B = Q_{i,0}+ G_i B, i=(i_f+1,{\kappa}-1)$, and store.

\textbf{Step 6.1.} Set $i=i_f+1.$ If $i=0$, find solution of $z_s(0)B=0; z_s(0)e=1.$ Otherwise start $i=1$ and compute  $z_s(i)= -z_s(i-1)\mathscr{Q}_{i-1,i}(\mathscr{Q}_{i,i} +\mathscr{Q}_{i,i+1}  G_{i})^{-1}$.

\textbf{Step 6.2.} Compute $||z_s(i)||.$ There can be three possible cases

\textit{case 1:} If $||z_s(i)||< \epsilon_f$, set $i^*=i$ and go to Step 7.

\textit{case 2:} If $||z_s(i)||> \epsilon_f$, and $i<{\kappa}$, increase $i$ by one and go to Step 6.1.

\textit{case 3:} If $||z_s(i)||> \epsilon_f$, and $i={\kappa}$, set $i_0={\kappa}+1$, ${\kappa}=i_0-1+2s$ and $i_f={\kappa}$ and go to Step 4.2.

\textbf{Step 7.} Calculate vectors $z_s({\kappa}), k=\overline{0,i^*}$ as $z_s({\kappa}+1) = cz_s({\kappa})$, where $c = \frac{1}{z_s(0)+z_s(1)+z_s(2)+\ldots+z_s(i^*)}$ is a normalizing constant.

\textbf{Advantages:} The main advantage of the proposed algorithm is that the initial point will not be selected at random. Rather than starting the algorithm from the 0 or some random point, this algorithm starts the iteration with a logical initial point which saves less computation and time of matrices. The second point is that no calculation of $G$ matrix is repeated. All the matrices calculated in previous iteration are stored and used in the next iteration. In this algorithm, a sequential search is done considering a search interval. No point has been discarded unlike the previous algorithm.

\section{Performance Measures} \label{section4}
The following relevant  performance measures for  the proposed system  are calculated, after computing the  steady-state distribution $z_s$.
\begin{enumerate}

		\item The probability that there are $\mathpzc{l}$ number of retrial calls:
	\[P_{orbit}(\mathpzc{l}) = \sum_{\kappa_1=0}^{S}\sum_{\kappa_2=0}^{S-\kappa_1} {z_s} (\mathpzc{l},\kappa_1,\kappa_2,0,0)e. \]
	
	\item Expected number of retrial calls:
	\[E_{orbit} = \sum_{\mathpzc{l}=0}^{\infty} \mathpzc{l} P_{orbit}(\mathpzc{l})e. \]
	
\item The probability that $\kappa_1$ number of handoff calls are receiving service:
	\[ P_{\mathcal{E}}(\kappa_1) =   \sum_{\mathfrak{j}=1}^{S-1} {z_s} (0,0,0,\mathfrak{j},0)e + \sum_{\kappa_2=0}^{S-\kappa_1}  {z_s} (0,\kappa_1,\kappa_2,0,0)e +
\sum_{\kappa_2=0;\kappa_1+\kappa_2 \leq K}^{S-\kappa_1}  {z_s} (0,\kappa_1,\kappa_2,S,0)e \] \[+
\sum_{\kappa_2=0}^{S-\kappa_1} \sum_{i=1}^{K-\kappa_1-\kappa_2} {z_s} (0,\kappa_1,\kappa_2,S,i)e + 	\sum_{\mathpzc{l}=1}^{\infty}\sum_{\kappa_2=0}^{S-\kappa_1}  {z_s} (\mathpzc{l},\kappa_1,\kappa_2,0,0)e.\]		
		
		\item The probability that $\kappa_2$ number of new calls are receiving service:
	\[ P_{\mathcal{N}}(\kappa_2) =   \sum_{\mathfrak{j}=1}^{S-1} {z_s} (0,0,0,\mathfrak{j},0)e + \sum_{\kappa_1=0}^{S}  {z_s} (0,\kappa_1,\kappa_2,0,0)e +
\sum_{\kappa_1=0;\kappa_1+\kappa_2 \leq K}^{S}  {z_s} (0,\kappa_1,\kappa_2,S,0)e \] \[+
\sum_{\kappa_1=0}^{S} \sum_{i=1}^{K-\kappa_1-\kappa_2} {z_s} (0,\kappa_1,\kappa_2,S,i)e + 	\sum_{\mathpzc{l}=1}^{\infty}\sum_{\kappa_1=0}^{S}  {z_s} (\mathpzc{l},\kappa_1,\kappa_2,0,0)e.\]

	\item The probability that $i$ number of emergency calls are receiving service:
	\[ P_{\mathcal{E}}(i) =   \sum_{\mathfrak{j}=1}^{S} {z_s} (0,0,0,\mathfrak{j},0)e + \sum_{\kappa_1=0}^{S} \sum_{\kappa_2=0}^{S-\kappa_1} {z_s} (0,\kappa_1,\kappa_2,0,0)e +
\sum_{\kappa_1=0}^{S} \sum_{\kappa_2=0;\kappa_1+\kappa_2 \leq K}^{S-\kappa_1} {z_s} (0,\kappa_1,\kappa_2,S,0)e \] \[+
\sum_{\kappa_1=0}^{S} \sum_{\kappa_2=0;\kappa_1+\kappa_2 + i \leq K}^{S-\kappa_1} {z_s} (0,\kappa_1,\kappa_2,S,i)e + 	\sum_{\mathpzc{l}=1}^{\infty}\sum_{\kappa_1=0}^{S}\sum_{\kappa_2=0}^{S-\kappa_1}  {z_s} (\mathpzc{l},\kappa_1,\kappa_2,0,0)e.\]

	\item The probability that the system is under repair:
	\[ P_R =   \sum_{\mathfrak{j}=1}^{S-1} {z_s} (0,0,0,\mathfrak{j},0)e + \sum_{\kappa_1=0}^{K} \sum_{\kappa_2=0}^{K-\kappa_1} \sum_{i=0}^{K-\kappa_1-\kappa_2} {z_s} (0,\kappa_1,\kappa_2,S,i)e. \]
	
		\item The dropping probability of a handoff call:
\begin{itemize}
    \item[-] in normal scenario:
    	\[ P_d^n =  \frac{1}{\lambda_{\mathcal{H}}} \Big( \sum_{\mathpzc{l}=0}^{\infty} {z_s} (\mathpzc{l},S,0,0,0) (C_{\mathcal{H}}\otimes I_{
	{\scriptstyle{L_2T^{M_{\mathcal{H}}}_{\kappa_1}T^{N}_{\mathpzc{l}}}}}) e\Big) \]
	\item[-] in catastrophic scenario:
	\[ P_d^c =  \frac{1}{\lambda_{\mathcal{N}}} \Big( \sum_{\kappa_1=0}^{K} \sum_{\kappa_2=0}^{K-\kappa_1}{z_s} (0,\kappa_1,\kappa_2,S,K-\kappa_1-\kappa_2) (C_{\mathcal{H}}\otimes I_{
	{\scriptstyle{L_2T^{M_{\mathcal{H}}}_{\kappa_1}T^{M_{\mathcal{N}}}_{\kappa_2}T^{M_{\mathcal{E}}}_{K-\kappa_1-\kappa_2}T^{R}_{S}}}}) e\Big). \]
	
\end{itemize}

	\item The blocking probability of a new call in catastrophic scenario:
	\[ P_b^c =  \frac{1}{\lambda_{\mathcal{N}}} \Big( \sum_{\kappa_1=0}^{K} \sum_{\kappa_2=0}^{K-\kappa_1}{z_s} (0,\kappa_1,\kappa_2,S,K-\kappa_1-\kappa_2) (C_{\mathcal{N}}\otimes I_{
	{\scriptstyle{L_2T^{M_{\mathcal{H}}}_{\kappa_1}T^{M_{\mathcal{N}}}_{\kappa_2}T^{M_{\mathcal{E}}}_{K-\kappa_1-\kappa_2}T^{R}_{S}}}}) e\Big). \]

	\item The blocking probability of an emergency call:
	\[ P_e =  \frac{1}{\lambda_{\mathcal{E}}} \Big( {z_s} (0,0,0,S,K) (C_{\mathcal{E}}\otimes I_{
	{\scriptstyle{L_2T^{M_{\mathcal{E}}}_{K}T^{R}_{S}}}}) e\Big). \]

    \item Rate of losses due to the occurrence of catastrophe:
	\[\alpha_{f} = \alpha \sum_{\mathpzc{l}=0}^{\infty} \sum_{\kappa_1=0}^{S} \sum_{\kappa_2=0; \kappa_1=\kappa_1 \neq 0}^{S-\kappa_1}  {z_s} (\mathpzc{l},\kappa_1,\kappa_2,0,0)  (D_1\otimes I_{
	{\scriptstyle{L_1T^{M_{\mathcal{H}}}_{\kappa_1}T^{M_{\mathcal{N}}}_{\kappa_2}T^{N}_{\mathpzc{l}}}}})  e. \]     
	
	\item The probability that an arriving handoff call preempts the service of an ongoing new call in ordinary scenario:
	\[P_{preempt}^{new} = \frac{1}{\lambda_{\mathcal{H}}} \sum_{\mathpzc{l}=0}^{\infty}\sum_{\kappa_1=0}^{S-1}  {z_s} (\mathpzc{l},\kappa_1,S-\kappa_1,0,0)  (C_{\mathcal{H}}\otimes I_{
	{\scriptstyle{L_2T^{M_{\mathcal{H}}}_{\kappa_1}T^{M_{\mathcal{N}}}_{S-\kappa_1}T^{N}_{\mathpzc{l}}}}})  e. \]  
	
	\item The probability that an arriving emergency call preempts the service of an ongoing handoff/new call in catastrophic scenario:
	\[P_{preempt}^{emr} = \frac{1}{\lambda_{\mathcal{E}}} \Big(\sum_{\kappa_2=1}^{K}  {z_s} (0,0,\kappa_2,S,K-\kappa_2) + \sum_{\kappa_1=1}^{K} \sum_{\kappa_2=0}^{K-\kappa_1}  {z_s} (0,\kappa_1,\kappa_2,S,K-\kappa_1-\kappa_2) \Big)  \Big(C_{\mathcal{E}}\otimes I_{
	{\scriptstyle{L_2T^{M_{\mathcal{H}}}_{\kappa_1}T^{M_{\mathcal{N}}}_{\kappa_2}T^{M_{\mathcal{E}}}_{K-\kappa_1-\kappa_2}T^{R}_{S}}}} \Big) e. \]   
	
	\item The intensity by which a retrial call is successfully connected to an available channel:
	\[ \theta_r^{succ} =  \sum_{\mathpzc{l}=1}^{\infty}\sum_{\kappa_1=0}^{S}\sum_{\kappa_2=0}^{S-\kappa_1} \theta {z_s} (\mathpzc{l},\kappa_1,\kappa_2,0,0) ( I_{
	{\scriptstyle{L_1L_2T^{M_{\mathcal{H}}}_{\kappa_1}T^{M_{\mathcal{N}}}_{\kappa_2}T^{N}_{\mathpzc{l}-1}}}} \otimes( \Gamma^{0}(2)\otimes \beta_{\mathcal{N}}))e.\]

\end{enumerate}
\section{Numerical Illustration} \label{section5}
In this section, the qualitative behaviour of the proposed model is explored through a few experiments. All the numerical experiments have been conducted considering $\delta=10^{-12}, \epsilon_g=10^{-10}, $ and $\epsilon_f=10^{-10}$. 
 For the numerical computation,  the matrices for the $\textrm{\it M\!M\!A\!P}$ are referred from \cite{dudin2016analysis} as follows in normal scenario,
\begin{align}
	\nonumber
	C_{0}= \begin{pmatrix}
		-0.8109843 & 0\\
		0 & -0.02632213
	\end{pmatrix},C_{\mathcal{H}}= \begin{pmatrix}
		0.201398 & 0.0013479\\
		0.003665 & 0.0029153
	\end{pmatrix}, C_{\mathcal{N}}= \begin{pmatrix}
	0.604194 & 0.004043935\\
		0.0109956 & 0.00874592
	\end{pmatrix}.
\end{align}
and in catastrophic scenario as follows,
\begin{align}
	\nonumber
	C_{0}= \begin{pmatrix}
		-0.810 & 0\\
		0 & -0.026
	\end{pmatrix},C_{\mathcal{H}}= \begin{pmatrix}
		0.20 & 0.0013\\
		0.003 & 0.002
	\end{pmatrix}, C_{\mathcal{N}}= \begin{pmatrix}
	0.30 & 0.0020\\
		0.005 & 0.004
	\end{pmatrix},C_{\mathcal{E}}= \begin{pmatrix}
		0.30 & 0.0020\\
		0.005 & 0.004
	\end{pmatrix}.
\end{align}
The  correlation coefficients for both types of calls are $C_{r}^{(1)}=C_{r}^{(2)}=0.2$ and the  variation coefficients  for both types of calls are $C_{r}^{(1)}=C_{r}^{(2)}=12.34.$ 
The arrival rates  $\lambda_{\mathcal{H}}= 0.15, \lambda_{\mathcal{N}} =0.45$.
Let $\textrm{\it P\!H}$ distributions parameters for the service rates of a handoff and a new call  be
\begin{align}
	\nonumber
	\beta_{\mathcal{H}}= \begin{pmatrix}
	0.05, &0.95
	\end{pmatrix},~~ A_{\mathcal{H}}= \begin{pmatrix}
		-0.031 & 0\\
		0 & -2.4
	\end{pmatrix},~~\textrm{and}~~~\beta_{\mathcal{N}}= \begin{pmatrix}
		0.1,& 0.9
	\end{pmatrix}, ~~ A_{\mathcal{N}}= \begin{pmatrix}
		-0.033 & 0\\
		0 & -2.52
	\end{pmatrix},
\end{align}

\begin{align}
	\nonumber
\textrm{and}~~~	\beta_{\mathcal{E}}= \begin{pmatrix}
	0, &1
	\end{pmatrix},~~ A_{\mathcal{E}}= \begin{pmatrix}
		-1 & 0\\
		0 & -1
	\end{pmatrix}.
\end{align}
The fundamental service rates are  $\mu_{\mathcal{H}}=0.5,\mu_{\mathcal{N}}=0.3$. The retrial rate of a retrial call, following  $\textrm{\it PH}$ distribution, is given by the parameters \cite{artalejo2007modelling}
\begin{align*}
	\nonumber
	\gamma= \begin{pmatrix}
		1, & 0
	\end{pmatrix},~~~ \Gamma = \begin{pmatrix}
		-2 & 2\\
		0 & -2
	\end{pmatrix},~~~~\theta=1.
\end{align*}

To demonstrate the feasibility of the developed model,
some interesting observations of the proposed system are described through the following numerical experiments. These experiments will present the behaviour of  performance measures with respect to  arrival, service and retrial rates.\\

\textbf{Experiment 1:} The objective here is to analyze the impact of arrival rate ($\lambda_{\mathcal{H}}$) and service rate ($\mu_{\mathcal{H}}$) of handoff call  over the loss probabilities, i.e., dropping probability of handoff call in normal scenario ($P_d^n$) and preemption probability for new call in normal scenario ($P_{preempt}^{new}$).

Figures \ref{fig:1a} and \ref{fig:1b} represent $P_d^n$ and $P_{preempt}^{new}$ as functions of total number of channels $S$ and  $\lambda_{\mathcal{H}}$.  It can be observed from Figure \ref{fig:1a} that $P_d^n$  increases with respect to $\lambda_{\mathcal{H}}$ for a fixed value of $S$. Moreover, under the same value of $\lambda_{\mathcal{H}}$, $P_d^n$  decreases with respect to $S$. When  handoff calls arrive frequently in the system, all the channels are most likely to be occupied by handoff calls and consequently extra handoff calls will be dropped.
Therefore,  the value of $P_d^n$  increases with increasing $\lambda_{\mathcal{H}}$. In this scenario, if the total channels are increased in the system, more handoff call will be able to obtain the service, as a result, $P_d^n$  decreases.
Figure \ref{fig:1b} exhibits the impact of  $\lambda_{\mathcal{H}}$ and $S$ over $P_{preempt}^{new}$.
It is seen from the graph that the values of $P_{preempt}^{new}$  for different $S$, first increase, and then decrease. The cause for this behavior of $P_{preempt}^{new}$  lies in the following explanation.
When $\lambda_{\mathcal{H}}$  is relatively small, an arriving handoff call often finds at least one channel available, and consequently the ongoing service of a new call is not preempted by the arriving handoff call. As $\lambda_{\mathcal{H}}$ increases, the number of handoff calls also increase in the system. If an  arriving handoff call finds all the channels occupied and at least one of them is serving a new call, the service of that new call will be preempted by the arriving handoff call. Hence, $P_{preempt}^{new}$  increases and reaches maximum at some value of $\lambda_{\mathcal{H}}$. Further, the decreasing behaviour of $P_{preempt}^{new}$   is explained by the fact that, with the increment in $\lambda_{\mathcal{H}}$,  all the channels are occupied with handoff calls. Thus, the number of new calls in the service decreases and the probability that an arriving handoff call preempts the service of a new call decreases. 

Figures \ref{fig:2a} and \ref{fig:2b} show the decreasing behaviour of  $P_d^n$ and $P_{preempt}^{new}$ as  $\mu_{\mathcal{H}}$ increases in the system. Further, it can also be observed from  Figures \ref{fig:2a} and \ref{fig:2b} that, for the fixed value of $\mu_{\mathcal{H}}$,  $P_d^n$ and $P_{preempt}^{new}$ decrease as  $S$ increases. An intuitive explanation for this finding can easily be given as follows. As $\mu_{\mathcal{H}}$ increases, the calls are served with increasing rate, consequently the loss probabilities for both types of calls  decrease. If $S$ is increased in the system, the chances  for both types of calls to obtain the service also increase and consequently $P_d^n$and $P_{preempt}^{new}$ decrease.
\\

\textbf{Experiment 2:} The main motivation of this experiment is to show how the consideration of exponential retrial times might lead to under/over estimation of  performance measures of the system.

Figures \ref{fig:6a} and \ref{fig:6b} exhibits the behaviour of the intensity by which a retrial call is successfully connected to an available channels $\theta_r^{succ}$ with respect to  retrial rate $\theta$ and expected number of retrial calls $E_{orbit}$ with respect to new call arrival rate $\lambda_{\mathcal{N}}$. It can be seen from the graphs that $\theta_r^{succ}$ increases as $\theta$ increases in the system because the probability that more retrial calls will get connection increases. A clear difference can be observed in the values of  $\theta_r^{succ}$ for exponential retrial case and $P\!H$ distributed retrial case. The value of $\theta_r^{succ}$ is high for $P\!H$ distributed retrial case in comparison to exponentially distributed retrial case. This finding is useful in determining how taking exponentiality into account in the case of the retrial phenomenon can lead to under/over estimation of system performance. The retrial phenomena that occurs in communication networks is certainly not an exponential process, as is well acknowledged. Since the retrial of  call operation completes in many phases. Therefore, $P\!H$ distribution is an apt representation. The obtained results justifies the above mentioned statement. The same increasing behaviour can be observed from the Figure \ref{fig:6b} for $E_{orbit}$ with respect to $\lambda_{\mathcal{N}}$. The value of $E_{orbit}$ is more in case of $P\!H$ distributed retrial case than the exponentially distributed.\\

\textbf{Experiment 3:} The main purpose of this experiment is to observe the behaviour of loss probabilities in catastrophic scenario with respect to the arrival rate of emergency call ($\lambda_{\mathcal{E}}$) and service rate of emergency call ($\mu_{\mathcal{E}}$). 

Figures \ref{fig:3a} and \ref{fig:3b} exhibit the behaviour of blocking probability for an emergency call $P_{\mathcal{E}}$ with respect to  $\lambda_{\mathcal{E}}$ and $\mu_{\mathcal{E}}$.  $P_{\mathcal{E}}$ increases with respect to $\lambda_{\mathcal{E}}$ and decreases with respect to $\mu_{\mathcal{E}}$. An intuitive explanation for this finding can easily be given as follows. As $\lambda_{\mathcal{E}}$ increases, more emergency calls arrive to the system which leads to an increment in the blocking probability of emergency calls. The decreasing behaviour can be observed as $\mu_{\mathcal{E}}$ increases in the system because the emergency calls are served with increasing rate, consequently $P_{\mathcal{E}}$ decreases. If $K$ is increased in the system, the chances  for emergency calls to obtain the service also increase and consequently $P_{\mathcal{E}}$ decreases.

Figures \ref{fig:4a} and \ref{fig:4b} show the behaviour of blocking probability of new call $P_b^c$ with respect to $\lambda_{\mathcal{E}}$ and $\mu_{\mathcal{E}}$. $P_b^c$ increases as  $\lambda_{\mathcal{E}}$ increases in the system and decreases as $\mu_{\mathcal{E}}$ increases. As the number of emergency calls increases in the system, less channels will be available for handoff and new calls. As a consequence, an arriving handoff/new call will be dropped from the system and $P_b^c$ increases. The decreasing behaviour of $P_b^c$ can be observed as $\mu_{\mathcal{E}}$ increases in the system because the emergency calls will be served with an increasing rate, and more channels will be available for other calls, consequently $P_b^c$ decreases. If $K$  increases in the system, the chances  for handoff/new calls to obtain the service also increase and consequently $P_b^c$ decreases.

Figures \ref{fig:5a} and \ref{fig:5b} represents the behaviour of preemption probability for emergency call $P_{preempt}^{emr}$ over $\lambda_{\mathcal{E}}$ and $\mu_{\mathcal{E}}$. Initially, for small values of  $\lambda_{\mathcal{E}}$, $P_{preempt}^{emr}$ increases which shows that  due to the arrival of emergency calls, more handoff/new calls will be preempted from the system. After a certain value of $\lambda_{\mathcal{E}}$, $P_{preempt}^{emr}$ starts decreasing. Since, after a certain time period, most of the channels will be occupied with the emergency calls only, consequently, there will be no preemption of handoff/new calls in the system. Therefore, $P_{preempt}^{emr}$ decreases as $\lambda_{\mathcal{E}}$ increases. It can be observed from Figure \ref{fig:5b} that $P_{preempt}^{emr}$ is a decreasing function of $\mu_{\mathcal{E}}$ which is obvious. These observations are the main motivation for the formulation of the  optimization problem illustrated in Section \ref{section6}.

\begin{figure}[htp]
	\centering
	\subfigure[$P_d^n$ versus  $\lambda_{\mathcal{H}}$]
	{\includegraphics[trim= 3cm 0.1cm 3.0cm 0.4cm, height = 5.55cm,width = 0.5\textwidth]{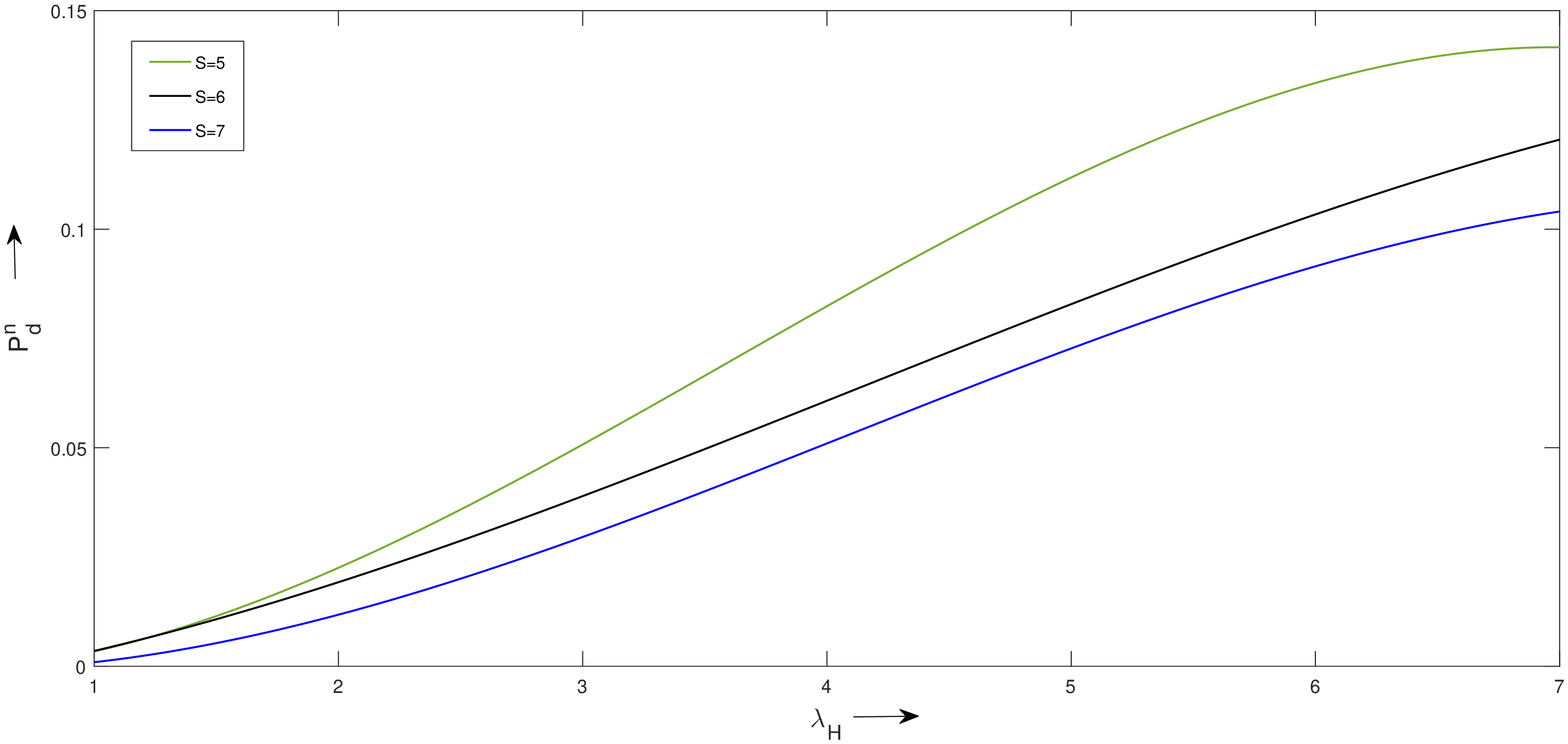}
		\label{fig:1a}}%
	\subfigure[$P_{preempt}^{new}$ versus  $\lambda_{\mathcal{H}}$]
	{\includegraphics[trim= 2.5cm 0.05cm 1.5cm 0.3cm, clip=true, height = 5.5cm,width = 0.5\textwidth]{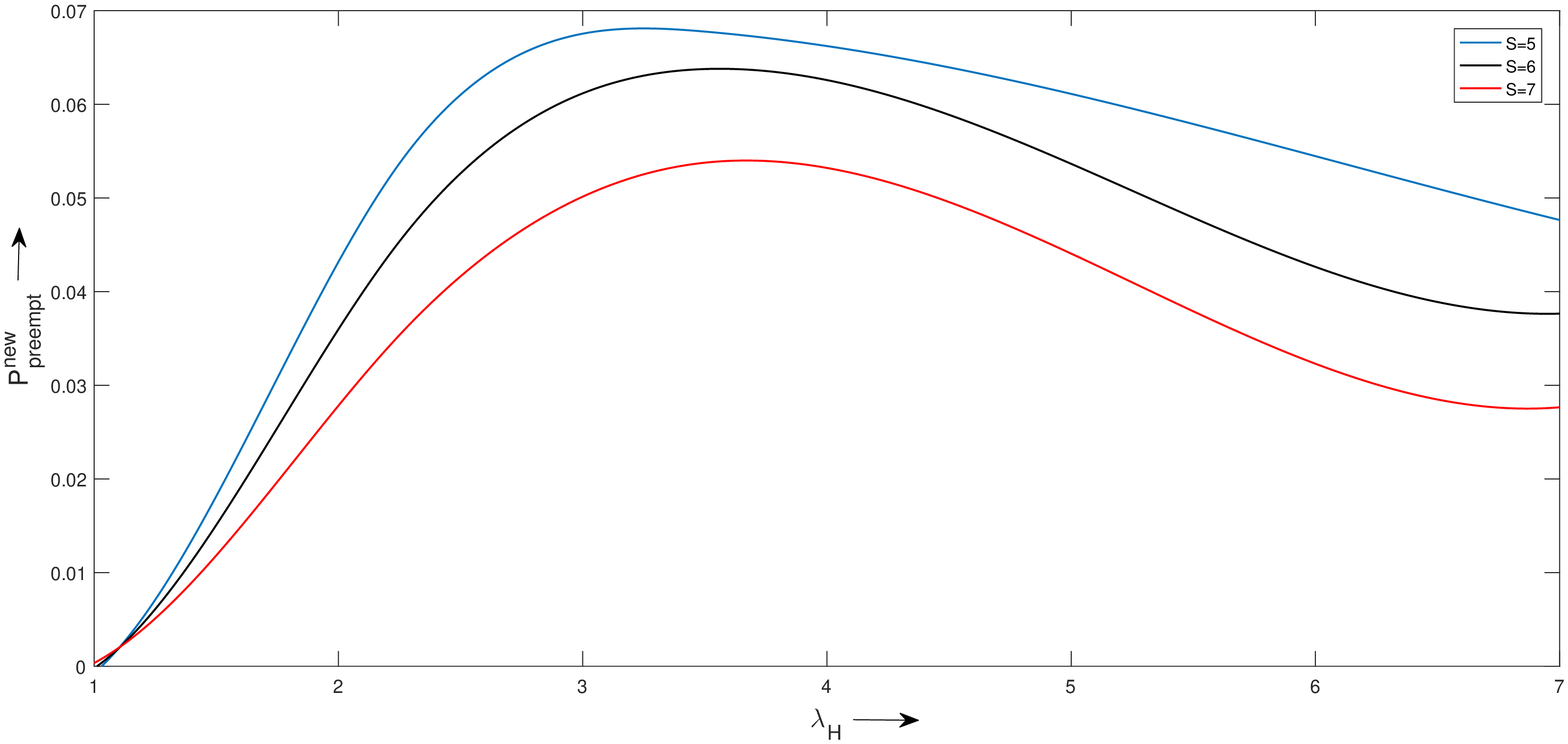}
		\label{fig:1b}}%
	\caption{Dependence of the dropping probability $P_d^n$ and preemption probability $P_{preempt}^{new}$ over arrival rate of a handoff call $\lambda_{\mathcal{H}}$. }
		\subfigure[$P_d^n$ versus  $\mu_{\mathcal{H}}$]
	{\includegraphics[trim= 3cm 0.1cm 3.0cm 0.4cm, height = 5.55cm,width = 0.5\textwidth]{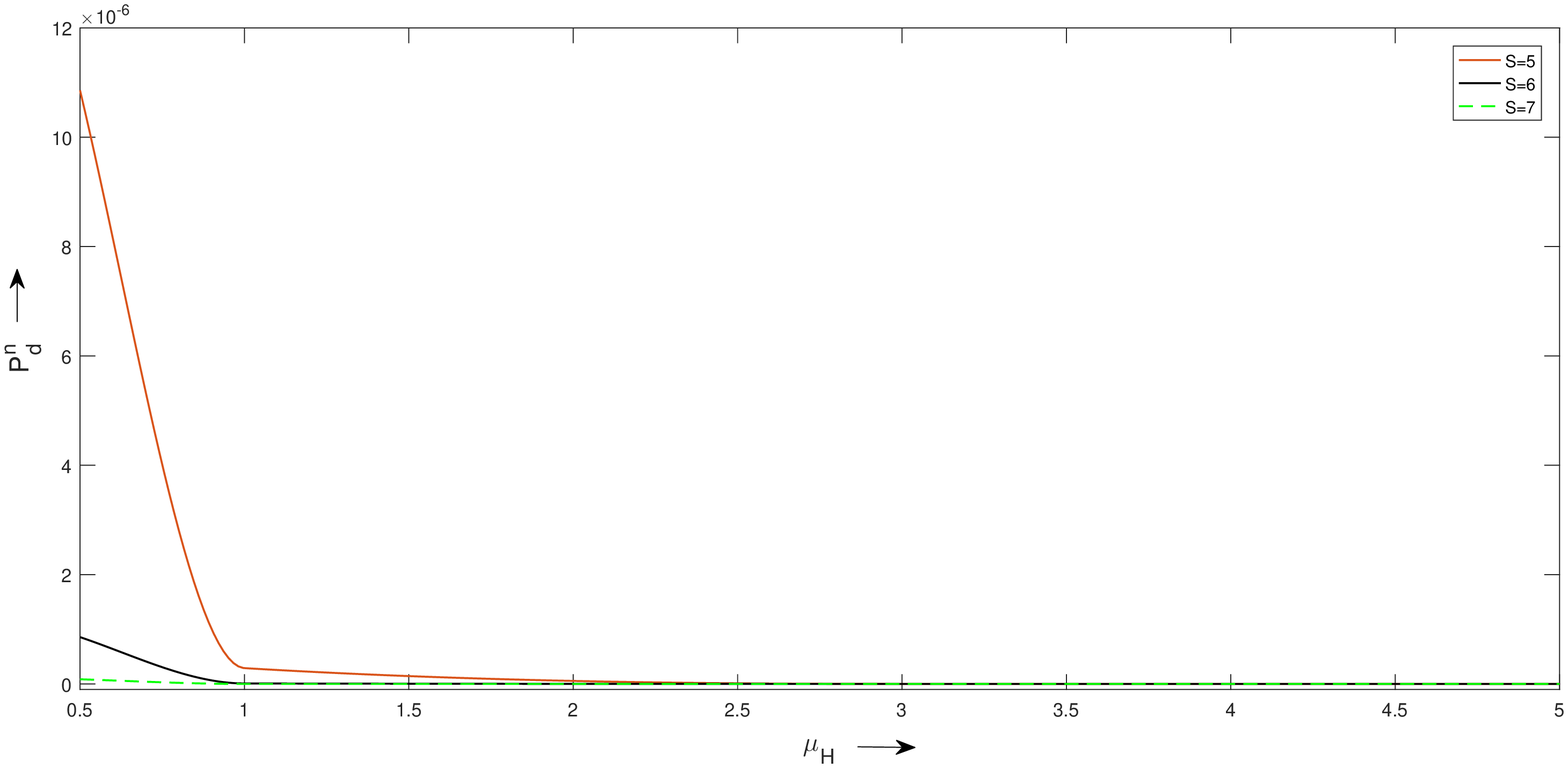}
		\label{fig:2a}}%
	\subfigure[$P_{preempt}^{new}$ versus  $\mu_{\mathcal{H}}$]
	{\includegraphics[trim= 2.5cm 0.05cm 2cm 0.3cm, clip=true, height = 5.5cm,width = 0.5\textwidth]{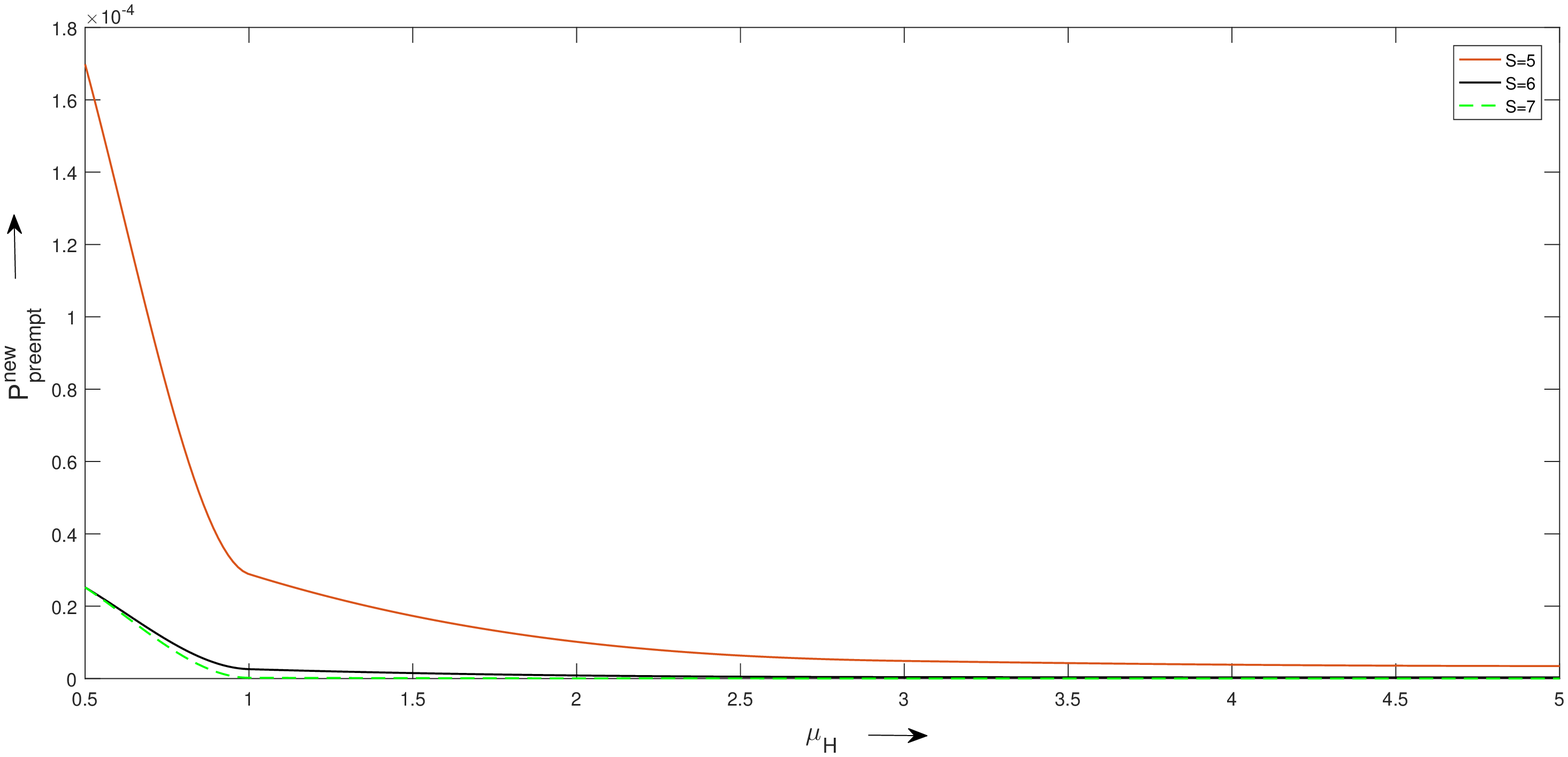}
		\label{fig:2b}}%
	\caption{Dependence of the dropping probability $P_d^n$ and preemption probability $P_{preempt}^{new}$ over service rate of a handoff call $\mu_{\mathcal{H}}$. }
		\subfigure[$\theta_r^{succ}$ versus  $\theta$]
	{\includegraphics[trim= 3cm 0.1cm 3.0cm 0.4cm, height = 5.55cm,width = 0.5\textwidth]{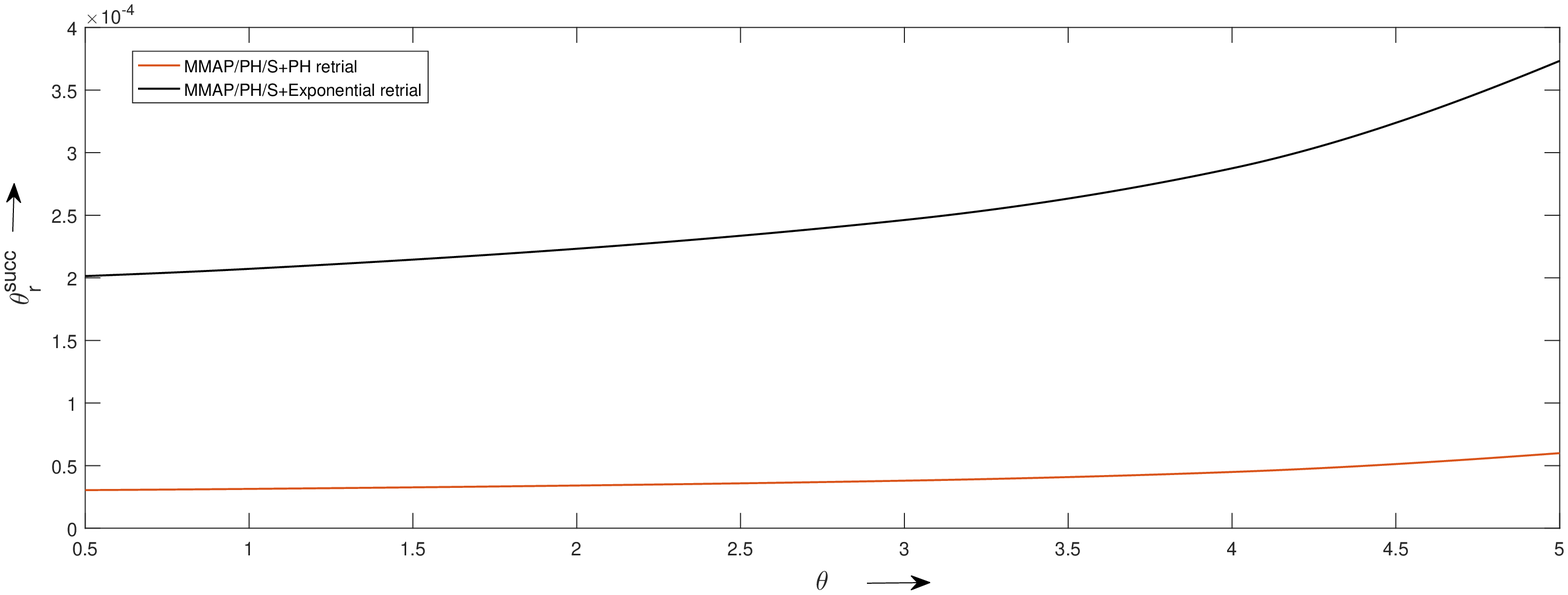}
		\label{fig:6a}}%
	\subfigure[$E_{orbit}$ versus  $\lambda_{\mathcal{N}}$]
	{\includegraphics[trim= 2.5cm 0.05cm 1.5cm 0.3cm, clip=true, height = 5.9cm,width = 0.5\textwidth]{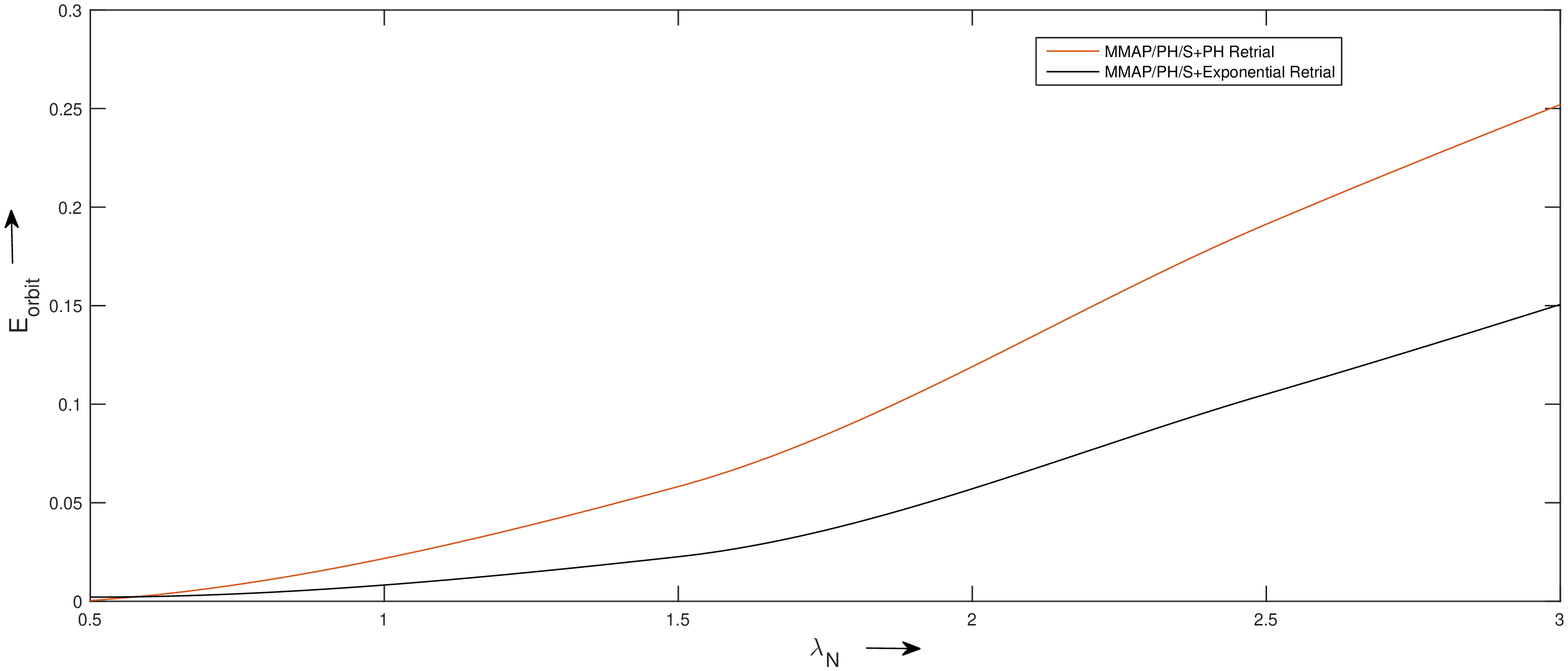}
		\label{fig:6b}}%
	\caption{Dependence of the intensity by which a retrial call is successfully connected to an available channel $\theta_r^{succ}$ over $\theta$ and  expected number of retrial calls $E_{orbit}$ over arrival rate of a new call $\lambda_{\mathcal{N}}$. }
	\label{fig:Pb_Pd1}
\end{figure}

\begin{figure}[htp]
	\centering
	\subfigure[$P_e$ versus  $\lambda_{\mathcal{E}}$]
	{\includegraphics[trim= 3cm 0.1cm 3.0cm 0.4cm, height = 5.55cm,width = 0.5\textwidth]{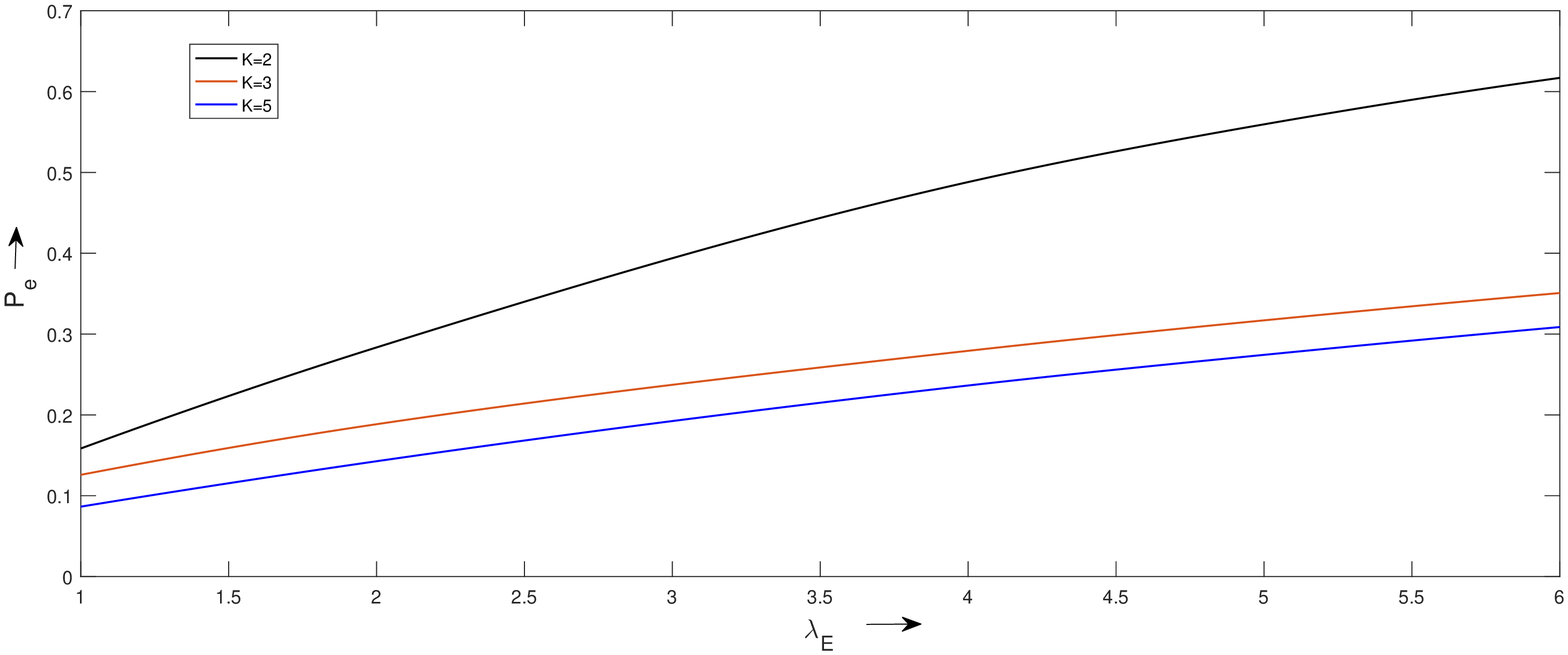}
		\label{fig:3a}}%
	\subfigure[$P_e$ versus  $\mu_{\mathcal{E}}$]
	{\includegraphics[trim= 2.5cm 0.05cm 1.5cm 0.3cm, clip=true, height = 5.5cm,width = 0.5\textwidth]{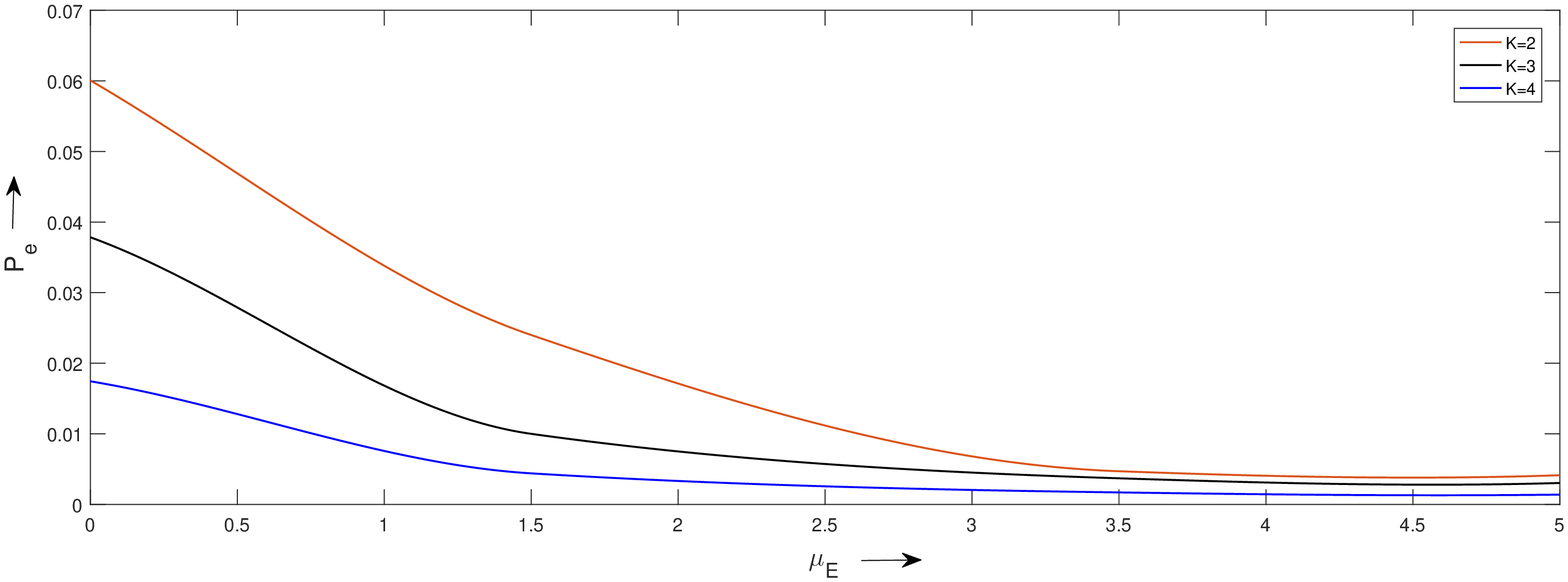}
		\label{fig:3b}}%
	\caption{Dependence of the  blocking probability for emergency call $P_e$ over arrival rate of an emergency call $\lambda_{\mathcal{E}}$ and service rate of an emergency call $\mu_{\mathcal{E}}$. }
		\subfigure[$P_b^c$ versus  $\lambda_{\mathcal{E}}$]
	{\includegraphics[trim= 3cm 0.1cm 3.0cm 0.4cm, height = 5.55cm,width = 0.5\textwidth]{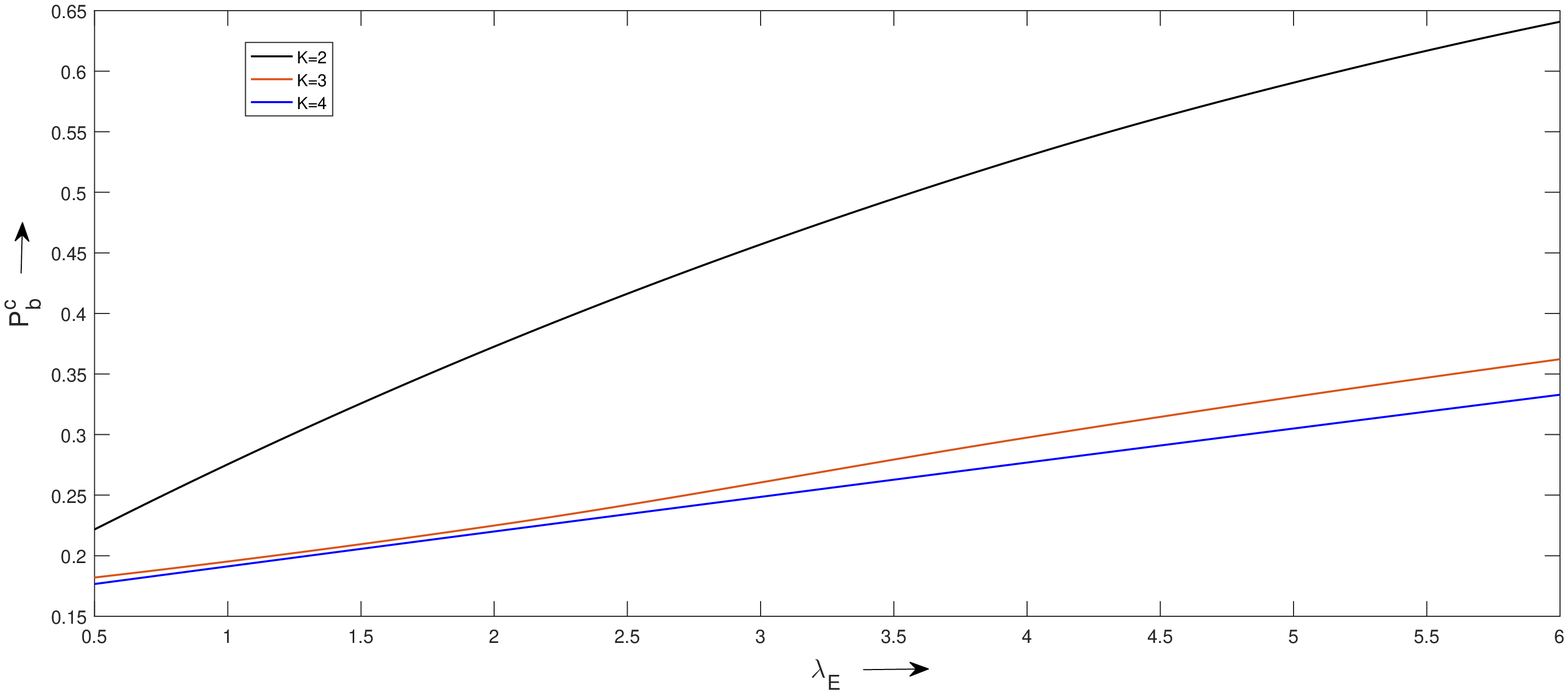}
		\label{fig:4a}}%
	\subfigure[$P_{b}^{c}$ versus  $\mu_{\mathcal{E}}$]
	{\includegraphics[trim= 2.5cm 0.05cm 2cm 0.3cm, clip=true, height = 5.5cm,width = 0.5\textwidth]{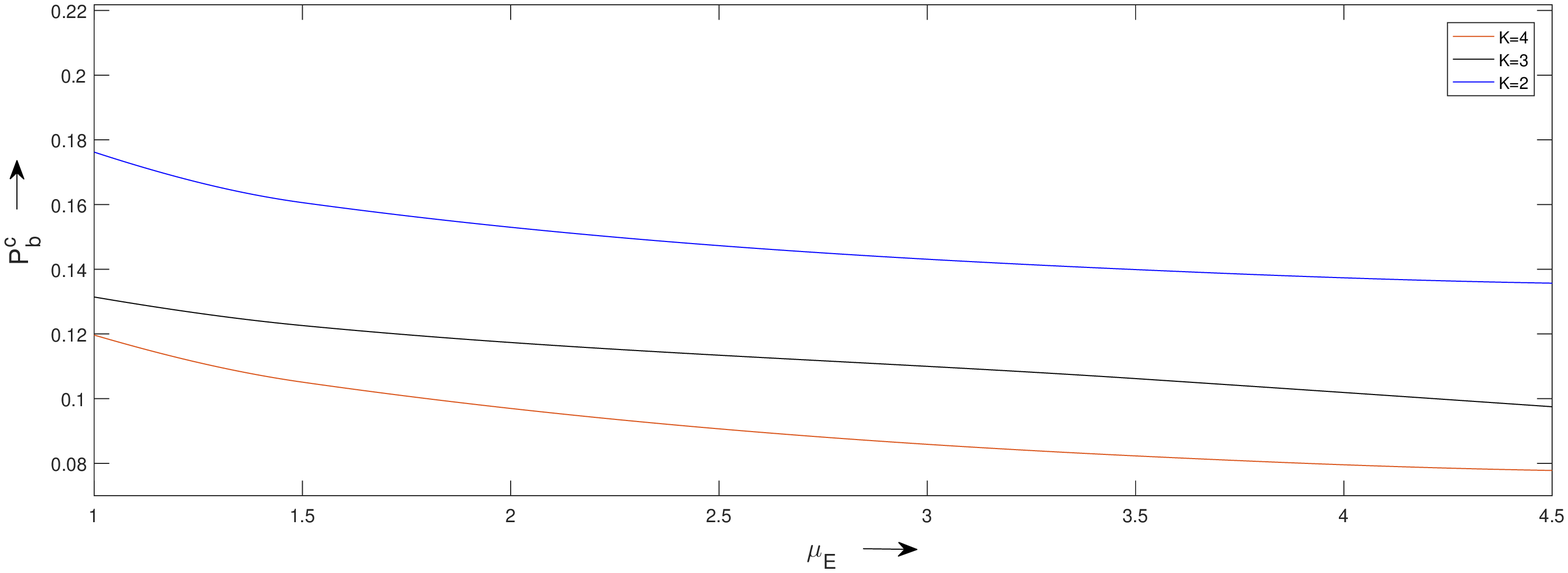}
		\label{fig:4b}}%
	\caption{Dependence of the blocking probability of new call $P_b^c$  over arrival rate of an emergency call $\lambda_{\mathcal{E}}$ and service rate of an emergency call $\mu_{\mathcal{E}}$. }
		\subfigure[$P_{preempt}^{emr}$ versus  $\lambda_{\mathcal{E}}$]
	{\includegraphics[trim= 3cm 0.1cm 3.0cm 0.4cm, height = 5.55cm,width = 0.5\textwidth]{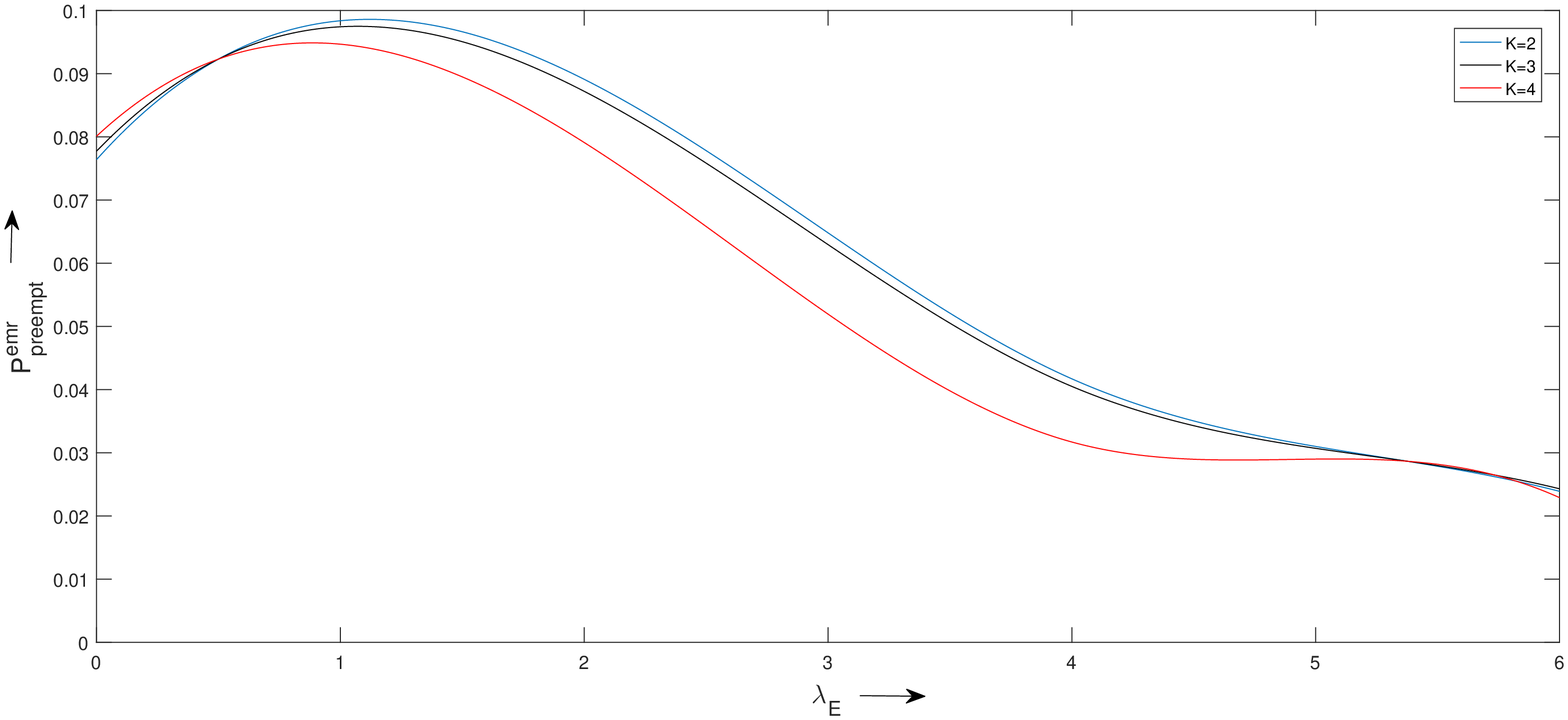}
		\label{fig:5a}}%
	\subfigure[$P_{preempt}^{emr}$ versus  $\mu_{\mathcal{E}}$]
	{\includegraphics[trim= 2.5cm 0.05cm 2cm 0.3cm, clip=true, height = 5.5cm,width = 0.5\textwidth]{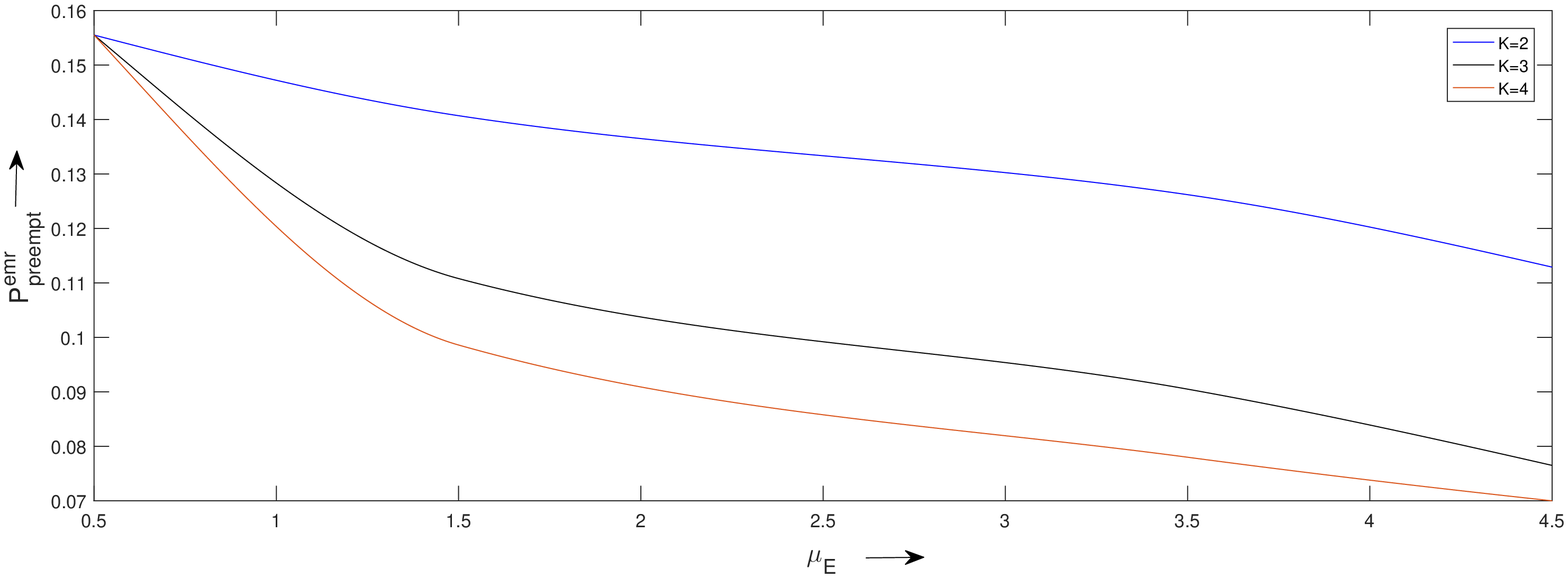}
		\label{fig:5b}}%
	\caption{Dependence of the preemption probability for emergency call $P_{preempt}^{emr}$   over arrival rate of an emergency call $\lambda_{\mathcal{E}}$ and service rate of an emergency call $\mu_{\mathcal{E}}$.  }
	\label{fig:Pb_Pd2}
\end{figure}

\section{Optimization Problem} \label{section6}
\begin{table}[htp]
	\centering
	\scalebox{1.1}{
	\begin{tabular}{|llllll|}
		\hline
		S=2 & $\mu_{\mathcal{N}}= 1$& $\mu_{\mathcal{H}}= 1$ & $\lambda_{\mathcal{N}}=1$ & $\lambda_{\mathcal{H}}=1$ & $\theta=1$\\
		\hline
	$K^*$&  $\lambda_{\mathcal{E}}^*$& $\mu_{\mathcal{E}}^*$ & $P_{\mathcal{E}}$ &$P_b^c$ & $P_{preempt}^{emr}$\\
	\hline
	1 & 1.525 & 4.7344 & 0.000160 & 0.000324 &0.000164\\
		\hline
		S=2 & $\mu_{\mathcal{N}}= 1$& $\mu_{\mathcal{H}}= 1.5$ & $\lambda_{\mathcal{N}}=1$ & $\lambda_{\mathcal{H}}=1$ &$\theta=1$\\
		\hline
	$K^*$&  $\lambda_{\mathcal{E}}^*$& $\mu_{\mathcal{E}}^*$ & $P_{\mathcal{E}}$ &$P_b^c$ & $P_{preempt}^{emr}$\\
		\hline
	1 & 1.525 & 4.774 & 0.000151 & 0.000235 &0.000159\\
		\hline
	S=2 & $\mu_{\mathcal{N}}= 1$& $\mu_{\mathcal{H}}= 2$ & $\lambda_{\mathcal{N}}=1$ & $\lambda_{\mathcal{H}}=1$ &$\theta=1$\\
		\hline
	$K^*$&  $\lambda_{\mathcal{E}}^*$& $\mu_{\mathcal{E}}^*$ & $P_{\mathcal{E}}$ &$P_b^c$ & $P_{preempt}^{emr}$\\
		\hline
	1 & 1.525 & 4.811 & 0.000146 & 0.000154 & 0.000120\\
		\hline
		S=2 & $\mu_{\mathcal{N}}= 1$& $\mu_{\mathcal{H}}= 2.5$ & $\lambda_{\mathcal{N}}=1$ & $\lambda_{\mathcal{H}}=1$ &$\theta=1$\\
		\hline
	$K^*$&  $\lambda_{\mathcal{E}}^*$& $\mu_{\mathcal{E}}^*$ & $P_{\mathcal{E}}$ &$P_b^c$ & $P_{preempt}^{emr}$\\
		\hline
	1 & 1.525 & 4.944 & 0.000124 & 0.000098 &0.000101\\
		\hline
		S=2 & $\mu_{\mathcal{N}}= 1$& $\mu_{\mathcal{H}}= 3$ & $\lambda_{\mathcal{N}}=1$ & $\lambda_{\mathcal{H}}=1$ &$\theta=1$\\
		\hline
	$K^*$&  $\lambda_{\mathcal{E}}^*$& $\mu_{\mathcal{E}}^*$ & $P_{\mathcal{E}}$ &$P_b^c$ & $P_{preempt}^{emr}$\\
		\hline
		1 & 1.525 & 4.994 & 0.000121 & 0.000097 &0.000100\\
		\hline
			\hline
	S=3 & $\mu_{\mathcal{N}}= 1$& $\mu_{\mathcal{H}}= 1$ & $\lambda_{\mathcal{N}}=1$ & $\lambda_{\mathcal{H}}=1$ &$\theta=1$\\
		\hline
	$K^*$&  $\lambda_{\mathcal{E}}^*$& $\mu_{\mathcal{E}}^*$ & $P_{\mathcal{E}}$ &$P_b^c$ & $P_{preempt}^{emr}$\\
	\hline
	2 & 1.755 & 6.66995 & 0.000028456 & 0.000030929 & 0.00004726\\
		\hline
				S=3 & $\mu_{\mathcal{N}}= 1$& $\mu_{\mathcal{H}}= 1$ & $\lambda_{\mathcal{N}}=1$ & $\lambda_{\mathcal{H}}=1$ &$\theta=1$\\
		\hline
	$K^*$&  $\lambda_{\mathcal{E}}^*$& $\mu_{\mathcal{E}}^*$ & $P_{\mathcal{E}}$ &$P_b^c$ & $P_{preempt}^{emr}$\\
	\hline
2 & 1.755 & 6.67755 & 0.000026044 & 0.000029999 & 0.00004694\\
		\hline
			S=3 & $\mu_{\mathcal{N}}= 1$& $\mu_{\mathcal{H}}= 1$ & $\lambda_{\mathcal{N}}=1$ & $\lambda_{\mathcal{H}}=1$ &$\theta=1$\\
		\hline
	$K^*$&  $\lambda_{\mathcal{E}}^*$& $\mu_{\mathcal{E}}^*$ & $P_{\mathcal{E}}$ &$P_b^c$ & $P_{preempt}^{emr}$\\
	\hline
2 & 1.755 & 6.67424 & 0.000024114 & 0.0000287414 & 0.000045422\\
		\hline
			S=3 & $\mu_{\mathcal{N}}= 1$& $\mu_{\mathcal{H}}= 1$ & $\lambda_{\mathcal{N}}=1$ & $\lambda_{\mathcal{H}}=1$ &$\theta=1$\\
		\hline
	$K^*$&  $\lambda_{\mathcal{E}}^*$& $\mu_{\mathcal{E}}^*$ & $P_{\mathcal{E}}$ &$P_b^c$ & $P_{preempt}^{emr}$\\
	\hline
2 & 1.755 & 6.65412 & 0.000022987 & 0.0000265415 & 0.000044111\\
		\hline
			S=3 & $\mu_{\mathcal{N}}= 1$& $\mu_{\mathcal{H}}= 1$ & $\lambda_{\mathcal{N}}=1$ & $\lambda_{\mathcal{H}}=1$ &$\theta=1$\\
		\hline
	$K^*$&  $\lambda_{\mathcal{E}}^*$& $\mu_{\mathcal{E}}^*$ & $P_{\mathcal{E}}$ &$P_b^c$ & $P_{preempt}^{emr}$\\
	\hline
2 & 1.755 & 6.66995 & 0.000028456 & 0.000030929 & 0.00004726\\
		\hline
			\hline
		S=4 & $\mu_{\mathcal{N}}= 1$& $\mu_{\mathcal{H}}= 1$ & $\lambda_{\mathcal{N}}=1$ & $\lambda_{\mathcal{H}}=1$ &$\theta=1$\\
		\hline
	$K^*$&  $\lambda_{\mathcal{E}}^*$& $\mu_{\mathcal{E}}^*$ & $P_{\mathcal{E}}$ &$P_b^c$ & $P_{preempt}^{emr}$\\
	\hline
2 & 2.25 & 6.85295 & 0.0001546 & 0.0002345 & 0.000081288\\
		\hline
			S=4 & $\mu_{\mathcal{N}}= 1$& $\mu_{\mathcal{H}}= 1.5$ & $\lambda_{\mathcal{N}}=1$ & $\lambda_{\mathcal{H}}=1$ &$\theta=1$\\
		\hline
	$K^*$&  $\lambda_{\mathcal{E}}^*$& $\mu_{\mathcal{E}}^*$ & $P_{\mathcal{E}}$ &$P_b^c$ & $P_{preempt}^{emr}$\\
	\hline
2 & 2.25 & 6.87745 & 0.0001224 & 0.0002289 & 0.0000800012\\
		\hline
			S=4 & $\mu_{\mathcal{N}}= 1$& $\mu_{\mathcal{H}}= 2$ & $\lambda_{\mathcal{N}}=1$ & $\lambda_{\mathcal{H}}=1$ &$\theta=1$\\
		\hline
	$K^*$&  $\lambda_{\mathcal{E}}^*$& $\mu_{\mathcal{E}}^*$ & $P_{\mathcal{E}}$ &$P_b^c$ & $P_{preempt}^{emr}$\\
	\hline
2 & 2.25 & 6.89455 & 0.00011124 & 0.00022241 & 0.0000800005\\
		\hline
			S=4 & $\mu_{\mathcal{N}}= 1$& $\mu_{\mathcal{H}}= 2.5$ & $\lambda_{\mathcal{N}}=1$ & $\lambda_{\mathcal{H}}=1$ &$\theta=1$\\
		\hline
	$K^*$&  $\lambda_{\mathcal{E}}^*$& $\mu_{\mathcal{E}}^*$ & $P_{\mathcal{E}}$ &$P_b^c$ & $P_{preempt}^{emr}$\\
	\hline
2 & 2.25 & 6.9041 & 0.00010012 & 0.00021442 & 0.0000799945\\
		\hline
			S=4 & $\mu_{\mathcal{N}}= 1$& $\mu_{\mathcal{H}}= 3$ & $\lambda_{\mathcal{N}}=1$ & $\lambda_{\mathcal{H}}=1$ &$\theta=1$\\
		\hline
	$K^*$&  $\lambda_{\mathcal{E}}^*$& $\mu_{\mathcal{E}}^*$ & $P_{\mathcal{E}}$ &$P_b^c$ & $P_{preempt}^{emr}$\\
	\hline
2 & 2.25 & 6.91455 & 0.000100044 & 0.00021141 & 0.0000788564\\
		\hline
		
	\end{tabular}}
	\caption{Optimal values of $\mu_{\mathcal{E}}^*$, $\lambda_{\mathcal{E}}^*$ and $K$ for different values of $S$ by applying NSGA-II method.}
	\label{tab:my_label1}
\end{table}

\noindent In the catastrophic scenario, the loss probabilities like blocking probability for emergency calls, blocking probability for new/handoff calls, and preemption probability for emergency calls are  considered as performance determining factors for  cellular networks. An increment in $P_{\mathcal{E}}$, $P_b^c$  as well as in $P_{preempt}^{emr}$ indicates unsatisfactory level of service for the customers.  Thus, it is required to find the optimal values of parameters in such a way that the loss probabilities should not exceed some pre-defined values. In Section \ref{section5}, a detailed analysis of loss probabilities with respect to several parameters has been provided. It can be observed from the results that $P_{\mathcal{E}}$, $P_b^c$  as well as in $P_{preempt}^{emr}$  are mostly affected by the arrival rate of emergency calls $\lambda_{\mathcal{E}}$, service rate of emergency calls $\mu_{\mathcal{E}}$ and the total number of backup channels $K.$ The service provider certainly cannot determine $\lambda_{\mathcal{E}}$, yet an approximated value of $\lambda_{\mathcal{E}}$ can be estimated  in order to keep sufficient backup channels to provide service. Thus, a non-trivial optimization problem is proposed with the decision variables $K$, $\mu_{\mathcal{E}}$ and $ \lambda_{\mathcal{E}}$ given as:\\
\begin{center}
$\begin{array}{lll}
&\textrm{min } &K\\
&\textrm{subject to},& P_{\mathcal{E}}( K,\lambda_{\mathcal{E}},\mu_{\mathcal{E}}) \leq \epsilon_1,\\
&&P_{b}^{c}(K,\lambda_{\mathcal{E}},\mu_{\mathcal{E}}) \leq \epsilon_2, \\
&&P_{preempt}^{emr}(K,\lambda_{\mathcal{E}},\mu_{\mathcal{E}}) \leq \epsilon_3, \\
&& K \leq S,\\
&& K,\lambda_{\mathcal{E}},\mu_{\mathcal{E}} \geq 0.
\end{array}$\\
\end{center}
Here, $\epsilon_1$, $\epsilon_2$  and $\epsilon_3$ are pre-defined values depending on the tolerance of the system for $P_{\mathcal{E}}$, $P_b^c$ and  $P_{preempt}^{emr}$, respectively.  Consider $\epsilon_1=\epsilon_2=\epsilon_3=10^{-3}$  for the numerical computation of the proposed optimization problem. 
The above defined  constraints are non-linear and highly complex.  Thus, a multi-objective evolutionary approach, non-dominated sorting genetic algorithm-II (NSGA-II) has been used to obtain the optimal solution. A modified version of non-dominated sorting genetic algorithm,  was proposed by Deb et al. \cite{deb2002fast} as NSGA-II. The detailed analysis of NSGA-II algorithm can be found in \cite{deb2002fast}. The main steps of NSGA-II are provided as follows. 
\begin{itemize}
    \item[1.] Initialize the population size $P$ based on the number of decision variables.
    \item[2.] \textbf{Non-dominated sorting:} The initialized population is sorted on the basis of non-domination. Each solution is assigned a fitness or rank equal to its non-domination level. The steps of sort algorithm are as follows:
    \begin{itemize}
        \item Initialize $S_p= \phi$. $S_p$ is set of all individuals that is being dominated by $p.$
        \item Initialize $n_p=0.$ This would be the number of individuals that dominate $p.$
        \item For each individual $q$ in $P$, if $p$ dominated $q$ then add $q$ to the set $S_p$ i.e., $S_p=S_p \cup \{q\} $ else if $q$ dominates $p$ then increment the domination counter for $p$ i.e., $n_p=n_p+1.$
        \item If $n_p=0$ i.e., no individuals dominate $p$ then $p$ belongs to the first front; Set rank of individual $p$ to 1. Update the first front i.e., $F_1 = F_1 \cup \{p\}.$
        This is carried out for all the individuals in main population $P.$ 
        \item Initialize the front counter to one, i.e., $i=1.$
        \item Define $Q=\phi$, the set for storing the individuals for $(i+1)^{th}$ front. Set $n_q=n_q-1$, decrements the domination count for individual $q$ in $S_p.$ Update the set $Q=Q \cup q.$
        \item Increment the front counter by one and set the next front as $F_i=Q.$ This step is carried out while the $i^{th}$ front is non empty.
    \end{itemize}
    \item[3.] \textbf{Crowding Distance:} All the individuals after non-dominated sort are assigned a crowding distance value. Crowding distance is assigned front wise and compared between two individuals.
    \item[4.] \textbf{Selection:} Once the individuals are sorted based on non-domination and with crowding distance assigned, the selection is carried out using a crowded-comparison operator.
    \item[5.] \textbf{Recombination and selection:} The offspring population is combined with the current generation population and selection is performed to set the individuals of the next generation. 
\end{itemize}

Tables \ref{tab:my_label1} and \ref{tab:my_label2} represents the optimal values of $k^*$, $\lambda_{\mathcal{E}}^*$, and $\mu_{\mathcal{E}}^*$ for different combinations of arrival and service rates and different values of $S.$  All results were obtained by MATLAB software, which were run on a computer with Intel Core i7-6700 3.40GHz CPU and 8 GB of RAM. The obtained results provide the value of optimal backup channels for various combinations of  total number of channels and arrival rates. The proposed optimization problem's sensitivity analysis is useful in estimating the number of backup channels in an emergency scenario. These findings are extremely beneficial in communication systems and cellular networks.

\begin{table}
	\centering
	\scalebox{1.1}{
	\begin{tabular}{|llllll|}
		\hline
		S=5 & $\mu_{\mathcal{N}}= 1$& $\mu_{\mathcal{H}}= 1$ & $\lambda_{\mathcal{N}}=1$ & $\lambda_{\mathcal{H}}=1$ &$\theta=1$\\
		\hline
	$K^*$&  $\lambda_{\mathcal{E}}^*$& $\mu_{\mathcal{E}}^*$ & $P_{\mathcal{E}}$ &$P_b^c$ & $P_{preempt}^{emr}$\\
	\hline
3 & 1.654 & 5.56205 & 0.000081836 & 0.000089307 & 0.00007471\\
		\hline
			S=5 & $\mu_{\mathcal{N}}= 1$& $\mu_{\mathcal{H}}= 1.5$ & $\lambda_{\mathcal{N}}=1$ & $\lambda_{\mathcal{H}}=1$ &$\theta=1$\\
		\hline
	$K^*$&  $\lambda_{\mathcal{E}}^*$& $\mu_{\mathcal{E}}^*$ & $P_{\mathcal{E}}$ &$P_b^c$ & $P_{preempt}^{emr}$\\
	\hline
3 & 1.654 & 5.5741 & 0.000080241 & 0.000088742 & 0.000073341\\
		\hline
			S=5 & $\mu_{\mathcal{N}}= 1$& $\mu_{\mathcal{H}}= 2$ & $\lambda_{\mathcal{N}}=1$ & $\lambda_{\mathcal{H}}=1$ &$\theta=1$\\
		\hline
	$K^*$&  $\lambda_{\mathcal{E}}^*$& $\mu_{\mathcal{E}}^*$ & $P_{\mathcal{E}}$ &$P_b^c$ & $P_{preempt}^{emr}$\\
	\hline
3 & 1.654 & 5.5787 & 0.000080240 & 0.000088701 & 0.000073339\\
		\hline
			S=5 & $\mu_{\mathcal{N}}= 1$& $\mu_{\mathcal{H}}= 2.5$ & $\lambda_{\mathcal{N}}=1$ & $\lambda_{\mathcal{H}}=1$ &$\theta=1$\\
		\hline
	$K^*$&  $\lambda_{\mathcal{E}}^*$& $\mu_{\mathcal{E}}^*$ & $P_{\mathcal{E}}$ &$P_b^c$ & $P_{preempt}^{emr}$\\
	\hline
3 & 1.654 & 5.5874 & 0.000080014 & 0.000087741 & 0.00007244\\
		\hline
			S=5 & $\mu_{\mathcal{N}}= 1$& $\mu_{\mathcal{H}}= 3$ & $\lambda_{\mathcal{N}}=1$ & $\lambda_{\mathcal{H}}=1$ &$\theta=1$\\
		\hline
	$K^*$&  $\lambda_{\mathcal{E}}^*$& $\mu_{\mathcal{E}}^*$ & $P_{\mathcal{E}}$ &$P_b^c$ & $P_{preempt}^{emr}$\\
	\hline
3 & 1.654 & 5.59874 & 0.0000879845 & 0.000086441 & 0.000070011\\
		\hline
			\hline
		S=6 & $\mu_{\mathcal{N}}= 1$& $\mu_{\mathcal{H}}= 1$ & $\lambda_{\mathcal{N}}=1$ & $\lambda_{\mathcal{H}}=1$ &$\theta=1$\\
		\hline
	$K^*$&  $\lambda_{\mathcal{E}}^*$& $\mu_{\mathcal{E}}^*$ & $P_{\mathcal{E}}$ &$P_b^c$ & $P_{preempt}^{emr}$\\
	\hline
4 & 1.5 & 5.74075 & 0.000047512 & 0.000057343 & 0.00009831\\
		\hline
			S=6 & $\mu_{\mathcal{N}}= 1$& $\mu_{\mathcal{H}}= 1.5$ & $\lambda_{\mathcal{N}}=1$ & $\lambda_{\mathcal{H}}=1$ &$\theta=1$\\
		\hline
	$K^*$&  $\lambda_{\mathcal{E}}^*$& $\mu_{\mathcal{E}}^*$ & $P_{\mathcal{E}}$ &$P_b^c$ & $P_{preempt}^{emr}$\\
	\hline
4 & 1.5 & 5.7221 & 0.000045442 & 0.000056134 & 0.000097745\\
		\hline
			S=6 & $\mu_{\mathcal{N}}= 1$& $\mu_{\mathcal{H}}= 2$ & $\lambda_{\mathcal{N}}=1$ & $\lambda_{\mathcal{H}}=1$ &$\theta=1$\\
		\hline
	$K^*$&  $\lambda_{\mathcal{E}}^*$& $\mu_{\mathcal{E}}^*$ & $P_{\mathcal{E}}$ &$P_b^c$ & $P_{preempt}^{emr}$\\
	\hline
4 & 1.5 & 5.7116 & 0.00004487 & 0.000055443 & 0.0000966412\\
		\hline
			S=6 & $\mu_{\mathcal{N}}= 1$& $\mu_{\mathcal{H}}= 2.5$ & $\lambda_{\mathcal{N}}=1$ & $\lambda_{\mathcal{H}}=1$ &$\theta=1$\\
		\hline
	$K^*$&  $\lambda_{\mathcal{E}}^*$& $\mu_{\mathcal{E}}^*$ & $P_{\mathcal{E}}$ &$P_b^c$ & $P_{preempt}^{emr}$\\
	\hline
4 & 1.5 & 5.70098 & 0.000042114 & 0.00005474 & 0.0000955414\\
		\hline
			S=6 & $\mu_{\mathcal{N}}= 1$& $\mu_{\mathcal{H}}= 3$ & $\lambda_{\mathcal{N}}=1$ & $\lambda_{\mathcal{H}}=1$ &$\theta=1$\\
		\hline
	$K^*$&  $\lambda_{\mathcal{E}}^*$& $\mu_{\mathcal{E}}^*$ & $P_{\mathcal{E}}$ &$P_b^c$ & $P_{preempt}^{emr}$\\
	\hline
4 & 1.5 & 5.70014 & 0.000042001 & 0.00005470 & 0.000095540\\
		\hline
	\end{tabular}}
	\caption{Optimal values of $\mu_{\mathcal{E}}^*$, $\lambda_{\mathcal{E}}^*$ and $K$ for different values of $S$ by applying NSGA-II method.}
	\label{tab:my_label2}
\end{table}

\section{Conclusions} \label{section7}
Queuing models with catastrophic occurrences are a driving force in communications and cellular networks in cutting-edge wireless technology. In such types of catastrophic queueing models, traffic of different classes, e.g., video, voice, images, data, etc., are assigned different categories of importance,  and consequently their services are effectuated in accordance with an appropriate priority policy.  In cellular networks, the kinds of systems where a higher priority traffic has an advantage in access to service compared to less important ones, are explored through priority policies. Therefore, in this study  a $\textrm{\it MMAP[k]/PH[k]/S}$  catastrophic  queueing model with preemptive repeat priority policy and \textrm{$P\!H$} distributed retrial times is investigated. Due to the brief span of inter-retrial times in comparison to service times, a more generalized approach, \textrm{$P\!H$} distributed retrial
times is used so that the performance of the system is not over or under estimated. The proposed model deals with three types of incoming calls, handoff call, new call and emergency call. Before the occurrence of a catastrophe, handoff call are provided preemptive priority over new call. Whereas, in the catastrophic scenario, emergency calls are provided preemptive priority over handoff and new calls. When a calamity strikes, backup channels are used to establish communication in the affected area. The underlying process of the presented system is modelled by \textrm{$A\!Q\!T\!M\!C$}. Ergodicity conditions of the underlying Markov chain are obtained by proving that the Markov chain belongs to the class of \textrm{$A\!Q\!T\!M\!C$}. A new algorithm is developed for approximate computation of the stationary distribution. Further, the expressions of various performance measures have been derived for the numerical illustration. Due to the  consideration of the preemptive  priority policy, the blocking probability for emergency calls decreases and simultaneously the frequent termination of services for handoff and new calls increases the probability of preemption. Thus, an optimization problem to obtain optimal value of total number of  backup channels  has been formulated and dealt by employing NSGA-II approach.

\end{document}